\newtheorem{thm}{Theorem}[section]
\newtheorem{cor}[thm]{Corollary}
\newtheorem{lem}[thm]{Lemma}
\newtheorem{prop}[thm]{Proposition}
\newtheorem{clm}[thm]{Claim}
\theoremstyle{remark}
\theoremstyle{definition}
\newtheorem{defi}[thm]{Definition}
\newcommand{\vp}{\ensuremath{\mathsf{VP}}}
\newcommand{\vpe}{\ensuremath{\mathsf{VP}_e}}
\newcommand{\vpws}{\ensuremath{\mathsf{VP}_{ws}}}
\newcommand{\vnp}{\ensuremath{\mathsf{VNP}}}
\newcommand{\lc}{\ensuremath{\mathsf{LOGCFL}}}
\newcommand{\push}{\mathop{push}}
\newcommand{\pop}{\mathop{pop}}
\newcommand{\nop}{\mathop{nop}}
\newcommand{\writ}{\mathop{write}}
\newcommand{\delete}{\mathop{delete}}
\newcommand{\op}{\mathop{op}}
\title{Arithmetic Branching Programs with Memory}
\author{Stefan Mengel\thanks{Partially supported by DFG grants BU 1371/2-2 and BU 1371/3-1.}\\Institute of Mathematics\\ University of Paderborn\\ D-33098 Paderborn, Germany\\ {\small\texttt{smengel@mail.uni-paderborn.de}} }
\begin{document}

\maketitle

\begin{abstract}
We extend the well known characterization of $\vpws$ as the class of polynomials computed by polynomial size arithmetic branching programs to other complexity classes. In order to do so we add additional memory to the computation of branching programs to make them more expressive. We show that allowing different types of memory in branching programs increases the computational power even for constant width programs. In particular, this leads to very natural and robust characterizations of $\vp$ and $\vnp$ by branching programs with memory.
\end{abstract}

\section{Introduction}

Arithmetic Branching Programs (ABPs) are a well studied model of computation in algebraic complexity: They were already used by Valiant in the \vnp-completeness proof of the permanent \cite{Valiant1979} and have since then contributed to the understanding of arithmetic circuit complexity (see e.g. \cite{ni91,Koiran12}). The computational power of ABPs is well understood: They are equivalent to both skew and weakly skew arithmetic circuits and thus capture the determinant, matrix power and other natural problems from linear algebra~\cite{MP08}. The complexity of bounded width ABPs is also well understood: In a parallel to Barrington's Theorem \cite{bar89}, Ben-Or and Cleve \cite{BC92} proved that polynomial size ABPs of bounded width are equivalent to arithmetic formulas.

We modify ABPs by giving them memory during their computations and ask how this changes their computational power. There are several different motivations for doing this: We define branching programs with stacks, that are an adaption of the nondeterministic auxiliary pushdown automaton (NAuxPDA) model to the arithmetic circuit model. The NAuxPDA-characterization of \lc\;has been very successful in the study of this class and has contributed a lot to its understanding. We give a characterization of \vp\;-- a class that is well known for its apparent lack of natural characterizations. In the Boolean setting graph connectivity problems on edge-labeled graphs that are similar to our ABPs with stacks have been shown to be complete for $\lc$ \cite{SkyumV85,WeberS07}. One motivation for adapting these results to the arithmetic circuit setting is the hope that one can apply techniques from the NAuxPDA setting to arithmetic circuits. We show that this is indeed applicable by presenting an adaption of a proof of Niedermeier and Rossmanith \cite{NR95} to give a straightforward proof of the classical parallelization theorem for \vp\;first proved by Valiant et al.\ \cite{VSBR83}. 

Another motivation is that our modified branching programs in different settings give various very similar characterizations of different arithmetic circuit classes. This allows us to give a new perspective on problems like \vp\;vs.\ \vpws, \vp\;vs.\ \vnp\; that are classical question from arithmetic circuit complexity. 
This is similar to the motivation that Kintali~\cite{Kintali10} has for studying similar graph connectivity problems in the Boolean setting.

Finally, all modifications we make to ABPs are straightforward and natural. The basic question is the following: ABPs are in a certain sense a memoryless model of computation. At each point of time during the computation we do not have any information about the history of the computation sofar apart from the state we are in. So what happens if we allow memory during the computation? Intuitively, the computational power should increase, and we will see that it indeed does (under standard complexity assumptions of course). How do different types of memory compare? What is the role of the width of the branching programs if we allow memory? In the remainder we will answer several of these questions. 

The structure of the paper is a follows: After some preliminaries we start off with ABPs that may use a stack during their computation. We show that they characterize~\vp, consider several restrictions and give a proof of the parallelization theorem for~\vp. Next we consider ABPs with random access memory, show that they characterize~\vnp\;and consider some restrictions of them, too. 

\section{Preliminaries}

\subsection{Arithmetic circuits}

We briefly recall the relevant definitions from arithmetic circuit complexity. A more thorough introduction into arithmetic circuit classes can be found in the book by Bürgisser~\cite{Bur00}. Newer insights into the nature of~\vp\;and especially~\vpws\;are presented in the excellent paper of Malod and Portier~\cite{MP08}.

An {\em arithmetic circuit} over a field $\mathbb{F}$ is a labeled directed acyclic graph (DAG) consisting of vertices or gates with indegree or fanin $0$ or $2$. The gates with fanin $0$ are called input gates and are labeled with constants from $\mathbb{F}$ or variables $X_1, X_2, \ldots$. The gates with fanin $2$ are called computation gates and are labeled with $\times$ or $+$. 

The polynomial computed by an arithmetic circuit is defined in the obvious way: An input gates computes the value of its label, a computation gate computes the product or the sum of its childrens' values, respectively. We assume that a circuit has only one sink which we call the output gate. We say that the polynomial computed by the circuit is the polynomial computed by the output gate. The \emph{size} of an arithmetic circuit is the number of gates. The \emph{depth} of a circuit is the length of the longest path from an input gate to the output gate in the circuit. 

We also consider circuits in which the $+$-gates may have unbounded fanin. We call these circuits {\em semi-unbounded circuits}. Observe that in semi-unbounded circuits $\times$-gates still have fanin $2$. A circuit is called \emph{multiplicatively disjoint} if for each $\times$-gate $v$ the subcircuits that have the children of $v$ as output-gates are disjoint. A circuit is called \emph{skew}, if for all of its $\times$-gates one of the children is an input gate.

We call a sequence $(f_n)$ of multivariate polynomials a family of polynomials or \emph{polynomial family}. We say that a polynomial family is of polynomial degree, if there is a univariate polynomial $p$ such that $\deg(f_n) \le p(n)$ for each $n$.
\vp \;is the class of polynomial families of polynomial degree computed by families of polynomial size arithmetic circuits. We will use the following well known characterizations of~\vp:

\begin{thm}(\cite{VSBR83,MP08})\label{thm:vpknown}
 Let $(f_n)$ be a family of polynomials. The following statements are equivalent:
\begin{enumerate}
 \item $(f_n) \in \vp$
 \item $(f_n)$ is computed by a family of multiplicatively disjoint polynomial size circuits.
 \item $(f_n)$ is computed by a family of semi-unbounded circuits of logarithmic depth and polynomial size.
\end{enumerate}
\end{thm}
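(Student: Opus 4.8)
The plan is to establish the cycle of implications $(1)\Rightarrow(3)\Rightarrow(2)\Rightarrow(1)$, only the first of which is substantial. Throughout I use the \emph{formal degree} of a gate: it is $1$ at every input gate, the maximum of the children's formal degrees at a $+$-gate, and the sum of the children's formal degrees at a $\times$-gate; the degree of the polynomial computed at a gate never exceeds its formal degree. The implication $(2)\Rightarrow(1)$ is then immediate: in a multiplicatively disjoint circuit of size $s$ a routine induction along the circuit bounds the formal degree of a gate by the number of gates in the subcircuit it roots -- disjointness being used exactly at the $\times$-gates -- so $f_n$ has degree at most $s$, which is polynomial, and together with the polynomial size this gives $(f_n)\in\vp$.

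For $(3)\Rightarrow(2)$, first note that in a semi-unbounded circuit of depth $O(\log n)$ every input-to-output path meets at most $O(\log n)$ many $\times$-gates, so the formal degree is at most $2^{O(\log n)}$, i.e.\ polynomial. It then suffices to turn any circuit of size $s$ and polynomial formal degree $t$ into an equivalent multiplicatively disjoint one, which I would do by \emph{reduplication}: process the circuit from the output downward and, each time a $\times$-gate is reached, install private copies of the two subcircuits feeding it, while subcircuits below $+$-gates may still be shared. A counting argument (roughly, the number of copies of a gate times its formal degree stays bounded by $t$) keeps the size polynomial in $s$ and $t$.

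For $(1)\Rightarrow(3)$, the parallelization theorem of~\cite{VSBR83}, let $C$ compute $f_n$ and put $d=\deg(f_n)$, which is polynomial by hypothesis. First \emph{homogenize} $C$: replace each gate $g$ by gates $[g,i]$ for $0\le i\le d$ computing the degree-$i$ homogeneous part of $g$, via $[g_1+g_2,i]=[g_1,i]+[g_2,i]$ and $[g_1\times g_2,i]=\sum_{j=0}^{i}[g_1,j]\cdot[g_2,i-j]$; this has polynomial size and polynomial formal degree, and by the reduplication step above we may assume in addition that $C$ is homogeneous and multiplicatively disjoint. Next, for gates $g,h$ with $h$ reachable from $g$, define recursively a polynomial $\partial_h g$ of degree $\deg(g)-\deg(h)$ capturing the contribution of the subcircuit at $h$ to $g$; homogeneity and multiplicative disjointness make the recursion consistent and compatible with the parse-tree decomposition of $g$. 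Calling a $\times$-gate $h=h_1\times h_2$ reachable from $g$ a \emph{$k$-frontier gate below $g$} when $\deg(h)>k$ while $\deg(h_1),\deg(h_2)\le k$, the key lemma is the identity, for every gate $g$ with $\deg(g)=k$,
\[
  g \;=\; \sum_{h=h_1\times h_2}(\partial_h g)\cdot h_1\cdot h_2 ,
\]
the sum ranging over all $\lfloor k/2\rfloor$-frontier gates below $g$; here $\deg(h_1),\deg(h_2)\le k/2$ and $\deg(\partial_h g)=k-\deg(h)<k/2$, so every factor has degree below $k/2$. An analogous identity expresses each $\partial_h g$ as a semi-unbounded sum of products of boundedly many factors whose degree parameters are again at most half as large. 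Since the relevant degrees halve every constant number of levels, after $O(\log d)=O(\log n)$ rounds one reaches linear polynomials; taking as gates all the $g$ and all the pairs $(g,h)$, of which there are polynomially many, and realizing each identity by one unbounded $+$-gate above polynomially many constant-size products, yields a semi-unbounded circuit of depth $O(\log n)$ and polynomial size, as required.

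The genuine obstacle is the frontier-decomposition lemma together with the degree bookkeeping that guarantees termination after logarithmically many rounds while keeping the gate count polynomial; the homogenization, reduplication and formal-degree estimates used in the easy implications are routine. Incidentally, this hard implication is precisely what is reproved later in the paper by a different route, adapting ideas of Niedermeier and Rossmanith~\cite{NR95} to arithmetic branching programs equipped with a stack.
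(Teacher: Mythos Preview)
The paper does not supply its own proof of Theorem~\ref{thm:vpknown}: it is quoted in the preliminaries as a known characterization, citing \cite{VSBR83,MP08}, and used as a black box thereafter. Your sketch is essentially an outline of the arguments in those references and is broadly sound. One small caveat on $(3)\Rightarrow(2)$: the invariant you propose---number of copies of $g$ times $\mathrm{fdeg}(g)$ bounded by $t$---fails already for the circuit $(g\times g)+(g\times g)$, where the leaf $g$ acquires four copies while $t=2$; the reduplication idea is the right one, but the bookkeeping that keeps the size polynomial needs a more careful argument than the one-line invariant you state (this is carried out in \cite{MP08}).

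The only implication the paper revisits on its own is $(1)\Rightarrow(3)$. Your route is the classical one from \cite{VSBR83}: homogenize, introduce the quotients $\partial_h g$, and halve degrees via the frontier decomposition. The paper instead reproves this direction in its depth-reduction section by a genuinely different path: it first establishes $\vp=\text{polynomial-size SBPs}$ (Theorem~\ref{thm:charvp}) and then balances \emph{stack-realizable paths} rather than circuit gates, adapting the Niedermeier--Rossmanith decomposition of paths and paths-with-gaps (Lemmas~\ref{lem:dec1}--\ref{lem:parallel}). Your approach is self-contained within the circuit model; the paper's approach buys a uniform treatment in the branching-program language that is the subject of the paper, at the price of the detour through the SBP characterization. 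You already flag this alternative route at the end of your proposal, and that remark is accurate.
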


\vpe\;is defined analogously to \vp\;with the circuits restricted to trees. By a classical result of Brent \cite{Br76}, \vpe\;equals the class of polynomial families computed by arithmetic circuits of depth $O(\log(n))$. \vpws~is the class of families of polynomials computed by families of skew circuits of polynomial size. Finally, a family $(f_n)$ of polynomials is in \vnp, if there is a family $(g_n)\in \vp$ and a polynomial $p$ such that $f_n(X) = \sum_{e\in \{0,1\}^{p(n)}} g_n(e,X)$ for all $n$ where $X$ denotes the vector $(X_1, \ldots, X_{q(n)})$ for some polynomial~$q$.

A polynomial $f$ is called a \emph{projection} of $g$ (symbol: $f \le g$), if there are values $a_i \in \mathbb{F} \cup \{X_1, X_2, \ldots\}$ such that $f(X) = g(a_1, \ldots, a_q)$. A family $(f_n)$ of polynomials is a $p$-projection of $(g_n)$ (symbol: $(f_n) \le_p (g_n)$), if there is a polynomial $r$ such that $f_n \le g_{r(n)}$ for all $n$. As usual we say that $(g_n)$ is hard for an arithmetic circuit class $\mathcal{C}$ if for every $(f_n) \in \mathcal{C}$ we have $(f_n) \le_p (g_n)$. If further $(g_n) \in \mathcal{C}$ we say that $(g_n)$ is $\mathcal{C}$-complete.

The following criterion by Valiant \cite{Valiant1979} for containment in $\vnp$ is often helpful:

\begin{lem}[Valiant's criterion]\label{lem:criterion}
 Let $\phi:\{0,1\}^*\rightarrow \mathbb{N}$ be a function in $\mathsf{\#P/poly}$, Then the family $(f_n)$ of polynomials defined by \[f_n = \sum_{e\in \{0,1\}^n} \phi(e) \prod_{i=1}^n X_i^{e_i}\] is in $\vnp$.
\end{lem}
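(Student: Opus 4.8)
The plan is to verify the definition of $\vnp$ directly: I will produce a family $(g_n)\in\vp$ and a polynomial $N$ so that $f_n(X)=\sum_{e\in\{0,1\}^{N(n)}}g_n(e,X)$. First I would unpack the hypothesis $\phi\in\mathsf{\#P/poly}$: there is a nondeterministic Turing machine $M$ running in time $\mathrm{poly}(n)$ together with advice strings $(a_n)$ of polynomial length such that $\phi(e)$ equals the number of accepting computations of $M$ on input $e$ with advice $a_n$, for every $e$ of length $n$.

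Next I would apply the Cook--Levin tableau construction. Encoding a computation of $M$ by a tableau with polynomially many cells, the predicate ``$w$ encodes an accepting computation tableau of $M$ on input $e$ with advice $a_n$'' is a conjunction of polynomially many local consistency checks, each touching only a constant number of bits of $e$ and $w$ and of the hardwired constant $a_n$. Hence it is computed by a polynomial-size Boolean formula $\psi_n(e,w)$ with $e\in\{0,1\}^n$ and $w\in\{0,1\}^{m(n)}$, $m$ a polynomial; and since distinct accepting computations yield distinct tableaux, $\phi(e)=\sum_{w\in\{0,1\}^{m(n)}}\psi_n(e,w)$ (with $\mathrm{true}=1$, $\mathrm{false}=0$). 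Arithmetizing $\psi_n$ in the usual way ($\wedge\mapsto\times$, $\neg u\mapsto 1-u$, $\vee$ by De Morgan) gives an arithmetic formula $C_n(E,W)$ of size $O(|\psi_n|)$, hence of polynomial size and, being a formula, of degree bounded by its size, which agrees with $\psi_n$ on Boolean inputs.

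Finally, using the identity $\prod_{i=1}^n X_i^{e_i}=\prod_{i=1}^n(e_iX_i+1-e_i)$ valid for $e\in\{0,1\}^n$, I set $g_n(E,W,X):=C_n(E,W)\cdot\prod_{i=1}^n(E_iX_i+1-E_i)$. This is the product of a polynomial-size formula with $n$ formulas of constant size, so $g_n$ has polynomial size and polynomial degree; hence $(g_n)\in\vp$. Summing over the Boolean cube, $\sum_{(e,w)\in\{0,1\}^{n+m(n)}}g_n(e,w,X)=\sum_{e\in\{0,1\}^n}\bigl(\sum_{w}C_n(e,w)\bigr)\prod_iX_i^{e_i}=\sum_{e\in\{0,1\}^n}\phi(e)\prod_iX_i^{e_i}=f_n(X)$, and $n+m(n)$ is polynomial in $n$, which is exactly what membership in $\vnp$ requires.

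I expect the delicate points to be bookkeeping rather than conceptual. The degree bound is the one place that forces a specific choice: one must arithmetize a \emph{formula} (the tableau check is a conjunction of clauses, hence a formula), since a naive arithmetization of an arbitrary polynomial-size Boolean circuit could make the degree exponential, whereas for a formula the size bounds the degree. The other point to check is that accepting computations of $M$ correspond bijectively to the witnesses $w$, so that the inner sum reproduces $\phi(e)$ exactly; over a field of positive characteristic this reproduces the image of $\phi(e)$ in $\mathbb{F}$, which is precisely the coefficient attached to $\prod_i X_i^{e_i}$ in $f_n$ as well, so no inconsistency arises.
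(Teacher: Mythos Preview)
The paper does not actually prove this lemma; it is quoted as a known result due to Valiant and is used as a black box later on (e.g.\ in Lemma~\ref{lem:containvnp} and in Appendix~\ref{app:dshard}). Your argument is the standard proof of Valiant's criterion (essentially the one in B\"urgisser's monograph~\cite{Bur00}): unfold $\mathsf{\#P/poly}$, apply Cook--Levin to obtain a polynomial-size Boolean \emph{formula} for the tableau check, arithmetize it, and multiply by the monomial-selector $\prod_i(E_iX_i+1-E_i)$. This is correct, and your care about arithmetizing a formula rather than a circuit (so that the degree stays polynomial) and about the bijection between witnesses and accepting computations addresses exactly the places where a careless write-up could go wrong.
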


\subsection{Arithmetic branching programs}

The second common model of computation in arithmetic circuit complexity are arithmetic branching programs.

\begin{defi}
 An \emph{arithmetic branching program} (ABP) $G$ is a DAG with two vertices $s$ and $t$ and an edge labeling $w : E \rightarrow \mathbb{F} \cup \{X_1, X_2, \ldots\}$. A path $P = v_1 v_2\ldots v_r$ in $G$ has the \emph{weight} $w(P) := \prod_{i=1}^{r-1} w(v_iv_{i+1})$. Let $v$ and $u$ be two vertices in $G$, then we define \[f_{v,u} = \sum_{P} w(P),\] where the sum is over all $v$-$u$-paths $P$. The ABP $G$ computes the polynomial $f_G = f_{s,t}$. The \emph{size} of $G$ is the number of vertices of $G$.
\end{defi}

Malod and Portier proved the following theorem:

\begin{thm}(\cite{MP08})
 $(f_n)\in \vpws$, iff $(f_n)$ is computed by a family of polynomial size~ABPs.
\end{thm}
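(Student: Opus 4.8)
The plan is to prove both implications by direct structural translations that move between skew circuits (which, by the definition used in this paper, witness membership in $\vpws$) and ABPs, each translation costing only a polynomial — in fact at most quadratic — blow-up in size.

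\textbf{From a polynomial-size ABP to $\vpws$.} Let $G$ be an ABP with source $s$, sink $t$, vertex set $V$ and edge set $E$. Restricting to the vertices reachable from $s$ leaves $f_G=f_{s,t}$ unchanged and forces $s$ to have in-degree $0$ (a surviving in-neighbour of $s$ would close a cycle), so we may assume this; fix a topological order of $V$. I would build a circuit with one gate $h_v$ per vertex $v\in V$, meant to compute $f_{s,v}$, the sum of the weights of all $s$-$v$-paths: set $h_s$ to the constant $1$ (the empty path), and for $v\neq s$ put $h_v := \sum_{(u,v)\in E} h_u\cdot w(u,v)$, realising the unbounded sum by a binary tree of $+$-gates. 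Since every edge label $w(u,v)$ is a constant or a variable, i.e.\ an input gate, each product $h_u\cdot w(u,v)$ is a skew multiplication, so the whole circuit is skew and has size $O(|E|)=O(|V|^2)$. Correctness is an induction along the topological order using the unique decomposition of any non-trivial $s$-$v$-path into an $s$-$u$-path followed by an edge $(u,v)$; the output gate $h_t$ computes $f_{s,t}=f_G$, so $(f_n)\in\vpws$.

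\textbf{From a polynomial-size skew circuit to an ABP.} Let $C$ be a skew circuit and fix a topological order of its gates. I would construct an ABP with a single source $s$ and, for each gate $v$, a vertex $t_v$, maintaining the invariant $f_{s,t_v}=\mathrm{val}(v)$, the polynomial computed at $v$. For an input gate $v$ with label $a$, add the edge $s\to t_v$ of weight $a$. For a $+$-gate $v=v_1+v_2$, add $t_{v_1}\to t_v$ and $t_{v_2}\to t_v$, both of weight $1$. For a $\times$-gate $v=v_1\times v_2$, skewness lets us take $v_2$ to be an input gate with label $a$, and we add $t_{v_1}\to t_v$ of weight $a$. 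All edges respect the topological order of $C$, so the resulting graph is a DAG in which every $s$-$t_v$-path uses exactly one in-edge of $t_v$ and has an $s$-$t_{v_i}$-path as prefix; summing weights, and noting that $f_{s,t_{v_i}}$ is already frozen by the time $v$ is processed, gives the invariant. Taking $t_v$ for the output gate $v$ of $C$ produces an ABP of size $O(|C|)$ computing the same polynomial, so $(f_n)\in\vpws$ implies $(f_n)$ is computed by polynomial-size ABPs.

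\textbf{Where the difficulty lies.} I do not expect a real obstacle; the one point needing genuine care is that the vertex identifications at $+$-gates (and the chaining at $\times$-gates) in the second direction create no \emph{spurious} $s$-$t_v$-paths — which is precisely what acyclicity together with the inherited topological ordering guarantees, and this is the step I would write out most carefully. The other bookkeeping (the reachability normalisation of the source in the first direction, and the size counts) is routine. As a by-product, the converse construction goes through verbatim for \emph{weakly} skew circuits — at a non-skew $\times$-gate one chains the two sub-ABPs in series, which is legitimate because the ``separable'' operand subcircuit occurs nowhere else — so the argument also reproves that skew and weakly skew circuits of polynomial size have the same power.
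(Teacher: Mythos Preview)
The paper does not give its own proof of this theorem: it is stated as a known result with a citation to Malod--Portier~\cite{MP08} and is used as background. Your argument is correct and is essentially the standard construction one finds in that reference --- the ABP-to-skew direction by computing $f_{s,v}$ via skew multiplications with the edge labels, and the skew-to-ABP direction by threading a single source through vertices $t_v$ indexed by gates --- so there is nothing in the paper to compare it against beyond noting that your proof matches the cited one in spirit.
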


\begin{defi}
 An ABP of width $k$ is an ABP in which all vertices are organized into layers $L_i, i\in \mathbb{N}$, there are only edges from layer $L_i$ to $L_{i+1}$ and the number of vertices in each layer $L_i$ is at most $k$.
\end{defi}

The computational power of ABPs of constant width was settled by Ben-Or and~Cleve:

\begin{thm}(\cite{BC92})
 $(f_n)\in \vpe$, iff $(f_n)$ is computed by a family of polynomial size ABPs of constant width.
\end{thm}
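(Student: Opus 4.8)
The plan is to prove the two implications separately; the easy direction turns a constant width branching program into a small formula, and the hard direction (Ben-Or--Cleve) turns a balanced formula into a width $3$ branching program.

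For ``constant width $\Rightarrow \vpe$'', observe that a width-$k$ ABP with $\ell+1$ layers is, by the very definition of $f_{s,t}$ as a sum of path weights, a selected entry of a product $M_1 M_2 \cdots M_\ell$ of $k \times k$ matrices whose entries are the edge labels, i.e.\ elements of $\mathbb{F} \cup \{X_1, X_2, \dots\}$. I would then compute any fixed entry of such a product by divide-and-conquer: split the product in the middle, recurse on each half, and combine the two resulting $k \times k$ matrices of subformulas using $k^2$ inner products of length $k$. If $F(\ell)$ is the maximal formula size needed for one entry of a product of at most $\ell$ of the matrices, then $F(\ell) \le 2k\,F(\lceil \ell/2\rceil) + O(k)$, hence $F(\ell) = O\!\big(k\,\ell^{\log_2 (2k)}\big)$, which is polynomial for constant $k$. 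Since a formula of polynomial size computes a polynomial of polynomial degree, the family lies in $\vpe$.

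For the converse, recall that $\vpe$ is the class of families of polynomial size arithmetic formulas, and that (by the result of Brent cited above, or simply by unfolding an $O(\log n)$-depth circuit into a tree) we may assume the formulas have depth $d = O(\log n)$. The heart of the argument is an induction on a formula $\phi$ producing a sequence of $3\times 3$ \emph{elementary} matrices --- matrices of the form $E_{ab}(y) := I + y\,e_a e_b^{\mathsf{T}}$ with $a \ne b$ and $y \in \mathbb{F}\cup\{X_1, X_2, \dots\}$ --- whose product equals $E_{ij}(\phi)$ for any prescribed ordered pair of distinct indices $i,j \in \{1,2,3\}$, with the length $\ell(\phi)$ of the sequence independent of $(i,j)$. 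A leaf (variable or constant $y$) gives the single matrix $E_{ij}(y)$. For $\phi = \phi_1 + \phi_2$, concatenate the sequences for $E_{ij}(\phi_1)$ and $E_{ij}(\phi_2)$, using $E_{ij}(a)E_{ij}(b) = E_{ij}(a+b)$; so $\ell(\phi_1+\phi_2) = \ell(\phi_1)+\ell(\phi_2)$. For $\phi = \phi_1 \cdot \phi_2$ one uses the commutator identity $E_{ij}(\phi_1\phi_2) = E_{im}(\phi_1)\,E_{mj}(\phi_2)\,E_{im}(-\phi_1)\,E_{mj}(-\phi_2)$, valid whenever $i,m,j$ are pairwise distinct --- this is exactly where width $3$ rather than $2$ is needed, so that a spare index $m$ always exists. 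A sequence for $E_{im}(-\phi_1)$ is obtained from one for $E_{im}(\phi_1)$ by reversing it and replacing each matrix $E_{ab}(y)$ by its inverse $E_{ab}(-y)$, again elementary; hence $\ell(\phi_1\phi_2) \le 2\ell(\phi_1) + 2\ell(\phi_2)$, and for a formula of depth $d$ this recurrence yields $\ell(\phi) \le 4^d = n^{O(1)}$. Finally, reading the product $M_1 \cdots M_\ell$ as a layered graph with $\ell+1$ layers of $3$ vertices, an edge of weight $(M_r)_{pq}$ from vertex $p$ of layer $r-1$ to vertex $q$ of layer $r$ (edges of weight $0$ omitted), and setting $s$ to be vertex $1$ of layer $0$ and $t$ to be vertex $3$ of layer $\ell$, gives a width-$3$ ABP of polynomial size with $f_{s,t} = (E_{13}(f_n))_{13} = f_n$.

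The main obstacle is the multiplication step of the induction: making the commutator identity usable forces one to carry programs for \emph{both} $E_{ij}(\psi)$ and $E_{ij}(-\psi)$ for every sub-expression, to exploit that inverses of elementary matrices are elementary and that program reversal inverts the product, and to keep the index bookkeeping consistent so that a third coordinate is always free. A second, quantitative subtlety is that the recursion $\ell(\phi_1\phi_2)\le 2\ell(\phi_1)+2\ell(\phi_2)$ is only polynomial after the formula has been balanced to logarithmic depth, so the reduction genuinely relies on the depth-reduction step and is not a direct structural simulation.
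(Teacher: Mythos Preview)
Your proof sketch is correct and is essentially the original Ben-Or--Cleve argument: the easy direction via divide-and-conquer on the matrix product, and the hard direction via the commutator identity on $3\times 3$ elementary matrices, applied to a depth-balanced formula. Note, however, that the paper does not give its own proof of this theorem at all; it is stated in the preliminaries as a known result and simply attributed to \cite{BC92}. So there is no ``paper's proof'' to compare against --- you have reconstructed the cited result rather than something the present paper proves.
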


\section{Stack branching programs}

\subsection{Definition}

Let $S$ be a set called \emph{symbol set}. For a symbol $s\in S$ we define two \emph{stack operations}: $\push(s)$ and $\pop(s)$. Additionally we define the stack operation $\nop$ without any arguments. A \emph{sequence of stack operations} on $S$ is a sequence $\mathop{op}_1 \mathop{op}_2 \ldots \mathop{op}_r$, where either $\op_i = \bar{\op}_i(s_i)$ for $\bar{\op}_i \in \{\push , \pop\}$ and $s_i\in S$ or $\op_i= \nop$. \emph{Realizable sequences} of stack operations are defined inductively:

\begin{itemize}
 \item The empty sequence is realizable.
 \item If $P$ is a realizable sequence of stack operations, then $\push(s) P \pop(s)$ is realizable for all $s\in S$. Also $\nop P$ and $P\nop$ are realizable sequences.
 \item If $P$ and $Q$ are realizable sequences of stack operations, then $PQ$ is a realizable sequence.
\end{itemize}

\begin{defi}
A \emph{stack branching program} (SBP) $G$ is an ABP with an additional edge labeling $\sigma: E \rightarrow \{ \op(s) \mid \op \in \{\push , \pop\}, s \in S\}\cup \{\nop\}$. A path $P = v_1 v_2 \ldots v_r$ in $G$ has the sequence of stack operations $\sigma(P) := \sigma(v_1v_2)\sigma(v_2v_3) \ldots \sigma(v_{r-1}v_r)$. If $\sigma(P)$ is realizable we call $P$ a \emph{stack-realizable path}. The SBP $G$ computes the polynomial \[f_G = \sum_{P} w(P),\] where the sum is over all stack-realizable $s$-$t$-paths $P$.
\end{defi}

It is helpful to interpret the stack operations as operations on a real stack that happen along a path through $G$. On an edge $uv$ with the stack operation $\sigma(uv) = \push(s)$ we simply push $s$ onto the stack. If $uv$ has the stack operation $\sigma(uv) = \pop(s)$ we pop the top symbol of the stack. If it is $s$ we continue the path, but if it is different from $s$ the path is not stack realizable and we abort it. $\nop$ stands for ``no operation'' and thus as this name suggests the stack is not changed on edges labelled with $nop$. Realizable paths are exactly the paths on which we can go from $s$ to $t$ in this way without aborting while starting and ending with an empty stack.

To ease notation we sometimes call edges $e$ with $\sigma(e)= \push(s)$ for an $s\in S$ simply $\push$-edges. $\pop$-edges and $\nop$-edges are defined in the obvious analogous way.

It will sometimes be convenient to consider only SBPs that have no $\nop$-edges. The following easy proposition shows that this is not a restriction.

\begin{prop}\label{prop:nonopSBP}
 Let $G$ be an SBP of size $s$. There is an SBP $G'$ of size $O(s^2)$ such that $f_G= f_{G'}$ and $G'$ does not contain any $\nop$-edges. If $G$ is layered with width $k$, then $G'$ is layered, too, and has width at most $k^2$.
\end{prop}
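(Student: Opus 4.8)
The plan is to eliminate $\nop$-edges by contracting maximal $\nop$-paths, tracking just enough information in the new vertex set to preserve stack-realizability. The key observation is that along any $s$-$t$ path the $\nop$-edges do not touch the stack, so a realizable sequence of stack operations is unchanged if we simply delete all its $\nop$ symbols. Thus, if $P$ is an $s$-$t$ path in $G$, write $P$ as an alternation of (possibly empty) $\nop$-subpaths and single non-$\nop$-edges; $P$ is stack-realizable in $G$ exactly when the sequence of its non-$\nop$-edge labels is realizable. So the natural construction is to introduce, for every ordered pair $(u,v)$ of vertices of $G$, a new vertex $[u,v]$ representing ``currently at $u$, having last left a non-$\nop$-edge (or $s$) at $v$''; more simply, since the stack operations only care about the non-$\nop$ edges, I would build $G'$ on vertex set roughly $V(G)\times V(G)$ (or a suitable subset) where the first coordinate records the current position and the second coordinate is irrelevant to stack behaviour but lets us route $\nop$-reachability correctly.

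Concretely, first I would compute, for every pair $u,v\in V(G)$, the polynomial $n_{u,v}$ that is the sum of weights of all $\nop$-only $u$-$v$ paths; this is the $f_{u,v}$ of the sub-ABP of $G$ consisting only of $\nop$-edges, and it is again the polynomial computed by a (nop-free is not required here — we will put these weights directly on single edges) ABP. Then I would define $G'$ to have, for each non-$\nop$-edge $e=(a,b)$ of $G$ with label $\sigma(e)\in\{\push(s),\pop(s)\}$ and weight $w(e)$, and for each pair $(u,a)$, an edge from the $G'$-vertex corresponding to $u$ to the $G'$-vertex corresponding to $b$, carrying stack label $\sigma(e)$ and weight $n_{u,a}\cdot w(e)$. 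The source of $G'$ is the vertex corresponding to $s$; for the sink I would add a fresh vertex $t'$ with, for every $v\in V(G)$, a $\nop$-edge — wait, we want no $\nop$-edges — so instead collapse this final $\nop$-stretch into the preceding step as well, i.e. route each edge produced above from $(u,a)$ to $b$ and additionally add for each $v$ a direct edge into $t'$ carrying weight $n_{v,t}$ and we must give it a stack label; to keep it $\nop$-free, absorb $n_{v,t}$ into the weight of the last genuine edge instead, at the cost of a constant blowup in the per-pair bookkeeping. A cleaner way that avoids this special-casing is to first prepend and append dummy $\push(\#)\dots\pop(\#)$ edges around the whole program using a fresh symbol $\#$ so that $s$ and $t$ are incident only to non-$\nop$ edges; then the uniform rule above applies. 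In all cases, a path in $G'$ corresponds bijectively, preserving weight and stack-operation sequence (after deleting $\nop$s), to an $s$-$t$ path in $G$, so $f_{G'}=f_G$, and $|V(G')|=O(s^2)$.

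For the layered-width claim: if $G$ is layered of width $k$ with layers $L_0,\dots,L_m$, then $\nop$-free segments still go from $L_i$ to $L_{i+1}$, and a maximal $\nop$-path from layer $i$ to layer $j$ followed by a non-$\nop$ edge to layer $j+1$ means the vertex $(u,a)$ with $u\in L_i$, $a\in L_j$ can be placed in a layer indexed by, say, the layer of the head $b$; collecting all such $(u,a)$ with $b\in L_{j+1}$ gives at most $|L_i|\cdot|L_j|\le k^2$ vertices mapping into the new layer $j+1$, while keeping all edges going one layer forward. (One must check the $\nop$-reachability weights $n_{u,a}$ respect the layering, which is immediate since $\nop$-edges of $G$ also go strictly forward.) So $G'$ is layered of width at most $k^2$.

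The main obstacle I expect is the boundary handling at $s$ and $t$: naively one also wants to absorb leading and trailing $\nop$-stretches, and doing this while simultaneously (i) staying $\nop$-free, (ii) keeping the quadratic size bound, and (iii) preserving the layered structure with width $k^2$ requires the small trick of padding with a fresh bracket symbol $\#$ (or an equivalent careful case analysis). The rest — the weight/stack-sequence-preserving bijection between paths — is routine once the vertex set and edge rule are set up correctly.
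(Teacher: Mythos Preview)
Your contraction approach (collapse maximal $\nop$-runs and absorb their total weight into the adjacent non-$\nop$ edge) hits a real obstacle in this paper's model: edge labels of an (S)BP must lie in $\mathbb{F}\cup\{X_1,X_2,\ldots\}$, not in the polynomial ring. The quantity $n_{u,a}$ you propose to use is the value of the $\nop$-only sub-ABP between $u$ and $a$, hence in general a genuine polynomial in the $X_i$; the label $n_{u,a}\cdot w(e)$ you want to place on a single $G'$-edge is therefore inadmissible. Replacing that edge by a small gadget computing $n_{u,a}\cdot w(e)$ just reinstates the very $\nop$-edges you were trying to eliminate (or forces you to solve the same problem recursively on the gadget). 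Your layered-width count has a separate slip: for a fixed target layer $j{+}1$ the start layer $i$ of the absorbed $\nop$-path is \emph{not} fixed, so the pairs $(u,a)$ with $a\in L_j$ that you want to put into the new layer number $\sum_{i\le j}|L_i|\cdot|L_j|$, not $|L_i|\cdot|L_j|$ for a single $i$; moreover edges of the contracted program can skip over arbitrarily many original layers, so the result is not layered in the required sense.

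The paper does the opposite of contracting: it \emph{subdivides} every edge. Each $e=uv$ becomes $u\to v_e\to v$ with $w'(uv_e)=w(uv)$, $w'(v_ev)=1$. If $e$ was a $\nop$-edge, the two halves get $\push(s),\pop(s)$ for one fixed symbol $s$ (a self-cancelling pair). If $e$ carried a real operation $\op\in\{\push(s),\pop(s)\}$, then \emph{both} halves get that same $\op$. Uniformly doubling every push and every pop preserves realizability (the stack after step $2i$ of the doubled run is just the symbol-wise doubling of the stack after step $i$ of the original), and the inserted self-cancelling pairs are harmless, so stack-realizable paths in $G$ and $G'$ correspond bijectively with equal weight. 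This uses only $|V|+|E|=O(s^2)$ vertices, never needs a non-atomic edge label, and for layered $G$ merely inserts one new layer of the $v_e$'s, of size at most $|L_i|\cdot|L_{i+1}|\le k^2$, between each consecutive pair $L_i,L_{i+1}$, giving the width bound for free.
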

\begin{proof}
 The idea of the construction is to subdivide every edge of $G$. So let $G$ be an SBP with vertex set $V$ and edge set $E$. Let $\sigma$ and $w$ be the stack symbol labeling and the weight function, respectively. $G'$ will have the vertex set $V\cup\{v_e\mid e\in E\}$, stack symbol labeling $\sigma'$ and weight function $w'$. The construction goes as follows: For each edge $e=uv\in E$ the SBP $G'$ has the edges $uv_e, v_ev$. We set $w'(uv_e) := w(uv)$ and $w'(v_ev):=1$. If $e$ is a $\nop$-edge we set $\sigma'(uv_e):= \push(s)$ and $\sigma'(v_ev)= \pop(s)$ for an arbitrary stack symbol $s$. Otherwise, both $uv_e$ and $v_ev$ get the stack operation $\sigma(uv)$.
 
 It is easy to verify that $G'$ has all desired properties.
\end{proof}

\subsection{Characterizing VP}

In this section we show that stack branching programs of polynomial size characterize~\vp.

\begin{thm}\label{thm:charvp}
$(f_n)\in \vp$, iff $(f_n)$ is computed by a family of polynomial size~SBPs.
\end{thm}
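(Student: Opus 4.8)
The plan is to prove both inclusions by exploiting the equivalence of \vp\ with multiplicatively disjoint circuits (Theorem~\ref{thm:vpknown}) and with semi-unbounded log-depth circuits, using the stack to ``remember'' which branch of a multiplication we are currently inside.

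\textbf{From SBPs to \vp.} First I would show that a polynomial-size SBP computes a polynomial in \vp. The natural idea is to mimic the classical proof that the NAuxPDA model captures \lc\ by a dynamic-programming / tensor argument. For vertices $u,v$ and a stack symbol $s$, define $A_{u,v}$ to be the sum of weights of all paths from $u$ to $v$ whose stack-operation sequence is realizable (empty stack at both ends, never popping below the start), and refine this with auxiliary quantities that track ``surface configurations''. The key combinatorial fact about realizable sequences is that any such path either (i) is a single $\nop$/empty step, (ii) decomposes as a concatenation $PQ$ of two shorter realizable sub-paths through some intermediate vertex, or (iii) decomposes as $\push(s)\,P\,\pop(s)$ with $P$ realizable. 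This gives a system of equations whose ``unknowns'' are the $f_{u,v}$-type quantities, polynomially many in the size of $G$, and in which each equation has the form $f = \sum (\text{product of two smaller } f\text{'s}) + (\text{linear stuff})$. Because the underlying graph is a DAG, there is a well-founded measure (e.g.\ by path length, which is bounded by the number of layers, or by a topological potential) making this recursion terminate; unfolding it yields a multiplicatively disjoint circuit of polynomial size computing $f_G=f_{s,t}$, hence $f_G\in\vp$ by Theorem~\ref{thm:vpknown}. I would need to be a little careful that the recursion is genuinely well-founded and that the resulting circuit has polynomial degree; the DAG structure and the bound on path lengths handle this.

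\textbf{From \vp\ to SBPs.} For the converse, take $(f_n)\in\vp$ and use the semi-unbounded log-depth circuit characterization from Theorem~\ref{thm:vpknown}: a circuit $C$ of depth $O(\log n)$ with unbounded $+$-gates and fanin-$2$ $\times$-gates. I would build an SBP that ``walks'' the circuit top-down from the output gate: at a $+$-gate the SBP branches nondeterministically into the addends via edges of weight $1$; at a $\times$-gate $v$ with children $v_\ell,v_r$ the SBP first goes into the left subcircuit, but before doing so pushes a symbol encoding ``return to $v$ and then do the right child'', so that after the left subcircuit's computation returns (with empty relative stack) it pops that symbol and proceeds to the right child; at an input gate labeled by a constant or variable it takes an edge of that weight to a ``return'' portion of the graph. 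The stack discipline exactly enforces that the two factors of each $\times$-gate are evaluated in properly nested fashion, so stack-realizable $s$-$t$ paths correspond bijectively (with correct weights) to ``proof trees'' of $C$, and hence $f_G=f_C$. Since the circuit has polynomial size, the SBP has polynomially many vertices. The depth bound is not actually needed here — multiplicative disjointness would suffice — but it makes the nesting structure transparent.

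\textbf{Main obstacle.} The harder direction is SBP~$\Rightarrow$~\vp: one must set up the right family of auxiliary quantities (analogous to ``surface pairs'' in the NAuxPDA literature) so that the recursion closes with only polynomially many variables and so that it is provably well-founded despite the stack allowing a priori exponentially long realizable paths; controlling the degree/size of the extracted circuit is the crux. The reverse direction is essentially a bookkeeping construction. I would also invoke Proposition~\ref{prop:nonopSBP} if convenient to assume there are no $\nop$-edges while analyzing path decompositions, though it is not strictly necessary.
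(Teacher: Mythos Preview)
Your plan for SBP $\Rightarrow$ \vp\ is essentially the paper's: dynamic programming on quantities $w(u,v,i)$ indexed by endpoints and path length. Two corrections. First, the decomposition you list (``concatenation $PQ$'' \emph{or} ``$\push(s)\,P\,\pop(s)$'') is ambiguous and would overcount; the paper uses the unique decomposition ``the first edge is a $\push(s)$; find its matching $\pop$ and split there'', i.e.\ $P = va\cdot P_{ab}\cdot bc\cdot P_{cu}$ with $P_{ab},P_{cu}$ realizable and $\sigma(va)=\push(s)$, $\sigma(bc)=\pop(s)$. Second, and more importantly, your stated ``main obstacle''---that the stack might allow exponentially long realizable paths---does not exist: the underlying graph of an SBP is a DAG, so every $s$--$t$ path has length at most $|V(G)|$. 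This is exactly why the length-indexed table has polynomially many entries and the recursion yields a polynomial-size circuit. (That circuit is not multiplicatively disjoint in general, contrary to what you write, but it need not be: polynomial size together with degree $\le |V(G)|$ already places $f_G$ in \vp.)

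For \vp\ $\Rightarrow$ SBP your idea---walk the circuit using the stack as a call stack---is the right intuition and close in spirit to the paper's, but the construction as you describe it is \emph{not} an SBP: the ``return portion of the graph'' creates cycles (down into the left child, back up, then down into the right), so what you actually obtain is what the paper calls a \emph{relaxed SBP}, in which one sums over realizable \emph{walks} of a prescribed length rather than over paths in a DAG. The missing step is to unwind this into a genuine DAG by layering (the paper's Lemma~\ref{lem:relaxed}), and that requires a polynomial bound on the relevant walk length. Your log-depth hypothesis does supply such a bound (proof trees have size $2^{O(\log n)}$), so your approach can be completed; the paper instead starts from multiplicatively disjoint circuits and carries along, for each gate $v$, a length $m_v\le 4|C_v|$ together with the invariant that no shorter realizable walk exists between the designated vertices (Proposition~\ref{prop:technical}). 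Either starting point works, but the acyclicity/unfolding step has to be made explicit---your sentence ``since the circuit has polynomial size, the SBP has polynomially many vertices'' is true only for the cyclic relaxed object, not for the SBP you ultimately need.
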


We the two direction of Theorem \ref{thm:charvp} independently.

\begin{lem}
 If $(f_n)$ is computed by a family of polynomial size SBPs, then $(f_n) \in \vp$.
\end{lem}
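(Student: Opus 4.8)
The plan is to simulate a polynomial-size SBP by a multiplicatively disjoint arithmetic circuit of polynomial size, which suffices by the equivalence in Theorem~\ref{thm:vpknown}. By Proposition~\ref{prop:nonopSBP} we may assume that the given SBP $G$ has no $\nop$-edges, at the cost of only a polynomial blow-up in size. The key observation is that a stack-realizable $s$-$t$ path decomposes recursively: reading such a path from left to right, the first edge is either a $\push$-edge, whose matching $\pop$ occurs somewhere later, or the path starts with a ``balanced'' block followed by the remainder. This is exactly the inductive definition of realizable sequences given in the preliminaries, and it is the structure we will exploit.

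Concretely, I would introduce, for every pair of vertices $(u,v)$ and every stack symbol $s\in S$ (or, more cleanly, for a suitable bounded set of ``context'' parameters), a gate computing the polynomial $g_{u,v}$ that sums the weights of all $u$-$v$ paths whose associated stack-operation sequence is \emph{realizable} (i.e.\ balanced, starting and ending with empty stack). The recurrence is: $g_{u,v}$ equals the sum over all ways to write a realizable $u$-$v$ path either as a single edge $uv$ that is a $\nop$-edge (absent here), or as $u \xrightarrow{\push(t)} u' \leadsto v' \xrightarrow{\pop(t)} v'' \leadsto v$ where the inner $u'$-$v'$ segment is itself realizable and the $v''$-$v$ segment is realizable; equivalently one sets up two mutually recursive families, one for ``a single matched bracket pair wrapping a realizable path'' and one for ``concatenations of such blocks.'' Writing $p$ for the first kind and $g$ for concatenations, we get roughly $p_{u,v} = \sum_{t\in S}\sum_{u',v'} w(uu')\,[\sigma(uu')=\push(t)]\;g_{u',v'}\;w(v'v)\,[\sigma(v'v)=\pop(t)]$ and $g_{u,v} = \sum_{w} p_{u,w}\cdot g_{w,v}$ plus a base case $g_{u,u}=1$; then $f_G = g_{s,t}$.

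The main obstacle, and the reason a naive reading of these recurrences does not immediately give a polynomial-size \vp\ circuit, is twofold. First, the concatenation recurrence $g_{u,v}=\sum_w p_{u,w} g_{w,v}$ as written is not well-founded — both $g$'s on the right can be ``as large'' as the left side — so one must order the recursion by path length (or by number of edges used), stratifying into $g^{(\le \ell)}_{u,v}$ for $\ell$ up to the polynomial length bound, and argue that a stack-realizable $s$-$t$ path in a polynomial-size SBP need only have polynomial length (which follows since $G$ is a DAG: every $s$-$t$ path has length at most $|V|-1$). Second, one must verify \emph{multiplicative disjointness}: in a product $p_{u,w}\cdot g_{w,v}$ the two factors correspond to edge-disjoint (indeed vertex-internally disjoint) portions of the underlying path, and in the bracket rule the inner realizable segment uses strictly fewer edges than the whole, so the subcircuits feeding any $\times$-gate are genuinely disjoint. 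Getting these bookkeeping details right — indexing the gates by (vertex pair, length budget), checking the base cases, and confirming the size is polynomial (the number of gates is $O(|V|^2 \cdot |V|)$ times $|S|$, and $|S|$ may be taken polynomial, or even reduced to a single symbol per the no-$\nop$ reduction used carefully) — is the bulk of the work; once the circuit is built, multiplicative disjointness is immediate from the disjoint-subpath structure, and Theorem~\ref{thm:vpknown} finishes the argument.
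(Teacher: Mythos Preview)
Your recursive decomposition and length-stratified dynamic programming is exactly what the paper does: the paper defines $w(v,u,i)$ as the sum of weights of stack-realizable $v$-$u$ paths of length $i$ and uses precisely your recursion (first edge is $\push(s)$, find its matching $\pop(s)$, inner segment realizable, remainder realizable), then outputs $f_G=\sum_i w(s,t,i)$. So the core of your argument is correct and coincides with the paper's.

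Where you diverge is the \emph{target} of the construction. You aim for a multiplicatively disjoint circuit and invoke Theorem~\ref{thm:vpknown}(2), whereas the paper simply observes at the outset that $\deg(f_{G_n})\le |G_n|$ (paths in a DAG are short), so polynomial size of the circuit alone already gives $(f_n)\in\vp$ by definition. This matters because your multiplicative disjointness argument has a real gap: you argue that in a product $p_{u,w}\cdot g_{w,v}$ ``the two factors correspond to edge-disjoint portions of the underlying path, so the subcircuits feeding any $\times$-gate are genuinely disjoint.'' That conflates semantic disjointness of the paths being summed with syntactic disjointness of the subcircuits. In the length-stratified DP, the gate $g^{(j)}_{a,b}$ may well occur inside the subcircuit computing $p^{(\ell_1)}_{u,w}$ \emph{and} inside the subcircuit computing $g^{(\ell_2)}_{w,v}$ (take any small $j$ and any common short realizable segment), so the circuit you build is not multiplicatively disjoint as written. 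Forcing disjointness by duplicating gates would blow up the size.

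The fix is exactly the paper's: drop the multiplicative-disjointness goal entirely, note the trivial polynomial degree bound, and conclude $\vp$-membership directly from polynomial circuit size. With that change your proof is complete (and identical to the paper's). The detour through Proposition~\ref{prop:nonopSBP} is harmless but unnecessary: the paper's recursion handles $\nop$-edges implicitly since the decomposition is stated for all realizable paths.
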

\begin{proof}
 Let $(G_n)$ be a family of SBPs computing $(F_n)$, of size at most $p(n)$ for a polynomial~$p$. Observe that $\deg(G_n) \le p(n)$, so we only have to show that we can compute the $G_n$ by polynomial size circuits $C_n$.

Let $G=G_n$ be an SBP with $m$ vertices, source $s$ and sink $t$.
The construction of $C=C_n$ uses the following basic observation: Every stack-realizable path $P$ of length $i$ between two vertices $v$ and $u$ can be uniquely decomposed in the following way. There are vertices $a,b, c \in V(G)$ and a symbol $s\in S$ such that there are edges $va$ and $bc$ with $\sigma(va) = \push(s)$ and $\sigma(bc)= \pop(s)$. Furthermore there are stack-realizable paths $P_{ab}$ from $a$ to $b$ and $P_{cu}$ from $c$ to $u$ such that $length(P_{ab}) + length(P_{cu}) = i - 2$ and $P=vaP_{ab}bcP_{cu}$. The paths $P_{ab}$ and $P_{cu}$ may be empty. We define $w(u,v,i):= \sum_P w(P)$ where the sum is over all stack-realizable $s$-$t$-paths of length $i$.

The values $w(v,u, i)$ can be computed efficiently with a straightforward dynamic programming approach. First observe that $w(v,u,i) = 0$ for odd $i$. For $i=0$ we set $w(v,u,0) = 0$ for $v\neq u$ and $w(v,v,0)= 1$. For even $i > 0$ we get \[ w(v,u,i) = \sum_{a,b,c, j, s} w(v,a) w(a,b,j) w(b,c) w(c,u, i-j-2), \] where the sum is over all $s\in S$, all $j \le i-2$ and all $a,b,c$ such that $\sigma(va) = \push(s)$ and $\sigma(bc)= \pop(s)$. With this recursion formula we can compute alNote that Kintali proved a similar result for the Turing machine setting.

l $w(v,u,i)$ with a polynomial number of arithmetic operations. Having computed all $w(v,u,i)$ we get $f_G = \sum_{i \in [m]} w(s,t,i)$.
\end{proof}

The more involved direction of the proof of Theorem \ref{thm:charvp} will be the second direction. To prove it it will be convenient to slightly relax our model of computation. A \emph{relaxed SBP}~$G$ is an SBP where the underlying directed graph is not necessarily acyclic. To make use of cyclicity we do not consider paths in a relaxed SBP $G$ but \emph{walks}, i.e.\ vertices and edges of $G$ may be visited several times. \emph{Realizable walks} are defined completely analogously to realizable paths. Also the weight $w(P)$ of a walk is defined in the obvious way. Clearly, we cannot define the polynomial computed by a relaxed ABP by summing over the weight of all realizable walks, because there may be infinitely many of them and they may be arbitrarily long. Hence, we define for each pair $u,w$ of vertices and for each integer $m$ the polynomial 
\[f_{u,v,m}:= \sum_{P} w(P),\] where the sum is over all stack-realizable $u$-$v$-walks $P$ in $G$ that have length $m$. Furthermore, we say that for each $m$ the relaxed SBP $G$ computes the polynomial $f_{G,m}:=f_{s,t,m}$.

The connection to SBPs is given by the following straight-forward lemma.

\begin{lem}\label{lem:relaxed}
 Let $G$ be a relaxed SBP and $m\in \mathbb{N}$. Then for each $m$ there is an SBP $G'_m$ of size $m|G|$ that computes $f_{G,m}$.
\end{lem}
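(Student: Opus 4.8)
The plan is to use the standard ``time-unrolling'' construction that turns a bounded-length walk problem in a cyclic graph into a path problem in a layered DAG. Write $V = V(G)$, and let $\sigma$ and $w$ be the stack-operation labeling and the weight function of $G$. The SBP $G'_m$ will have vertex set $V \times \{0,1,\ldots,m\}$, and for every edge $uv$ of $G$ and every $i \in \{0,\ldots,m-1\}$ it has an edge from $(u,i)$ to $(v,i+1)$ carrying the same weight $w(uv)$ and the same stack operation $\sigma(uv)$. The source of $G'_m$ is $(s,0)$ and the sink is $(t,m)$. Since all edges go from level $i$ to level $i+1$, the graph $G'_m$ is acyclic and is organized into $m+1$ layers, so it is a legitimate SBP, and it has $(m+1)|G|$ vertices.

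Next I would record the obvious bijection between $(s,0)$-$(t,m)$-paths in $G'_m$ and $s$-$t$-walks of length exactly $m$ in $G$: a path $(v_0,0)(v_1,1)\ldots(v_m,m)$ in $G'_m$ corresponds to the walk $v_0 v_1 \ldots v_m$ in $G$, and conversely every length-$m$ walk $v_0\ldots v_m$ from $s$ to $t$ lifts to exactly one such path by attaching the layer indices. This bijection is length-preserving by construction, and because the edge from $(v_i,i)$ to $(v_{i+1},i+1)$ inherits both the weight and the stack operation of $v_i v_{i+1}$, it also preserves the weight of the walk and its sequence of stack operations. In particular a path in $G'_m$ is stack-realizable if and only if the corresponding walk in $G$ is. Summing $w(\cdot)$ over stack-realizable $(s,0)$-$(t,m)$-paths therefore gives exactly $\sum_P w(P)$ taken over stack-realizable $s$-$t$-walks $P$ of length $m$, i.e.\ $f_{G'_m} = f_{s,t,m} = f_{G,m}$.

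There is essentially no obstacle here: the only things to verify are that $G'_m$ is acyclic (immediate from the layer structure) and that the level-by-level correspondence preserves weights and stack-operation sequences (immediate, since both labels are copied unchanged onto each new edge). The construction yields size $(m+1)|G|$ rather than exactly $m|G|$; if the precise constant matters one can merge the unique source layer into the first step, or delete the vertices not lying on any $(s,0)$-$(t,m)$-path, but for all later uses the bound $O(m|G|)$ is all that is needed.
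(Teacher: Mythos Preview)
Your argument is correct and is precisely the ``unwinding into layers'' construction the paper gives: copy each vertex once per time step and let each edge $uv$ of $G$ induce edges $(u,i)\to(v,i+1)$ with the same weight and stack operation, then read off the bijection between length-$m$ walks in $G$ and source--sink paths in the layered DAG. Your observation about $(m+1)|G|$ versus $m|G|$ is fair---the paper's own count with $m$ copies actually yields walks of length $m-1$, so your version is in fact the more careful one; as you note, only the $O(m|G|)$ bound matters downstream.
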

\begin{proof}
 The idea is to unwind the computation of the relaxed SBP into $m$ layers. Let $G=(V,E,w,\sigma)$, then for each $v\in V$ the SBP $G'$ has $m$ copies $\{v_1, \ldots, v_m\}$. For each $uv\in E$ the SBP $G'$ had the edges $u_iv_{i+1}$ for $i\in [m-1]$ with weight $w(u_iv_{i+1}) := w(uv)$ and stack operation $\sigma(u_iv_{i+1}) := \sigma(uv)$. This completes the construction of $G'$. 
 
 Clearly, $G'$ indeed computes $f_{G,m}$ and has size $m|G|$.
\end{proof}

To prove the characterization of $\vp$ we show the following rather technical proposition:

\begin{prop}\label{prop:technical}
 Let $C$ be a multiplicatively disjoint arithmetic circuit. For each $v\in V$ we denote by $C_v$ the subcircuit of $C$ with output $v$ and we denote by $f_v$ the polynomial computed by $C_v$. Then there is a relaxed SBP $G=(V,E,w, \sigma) $ of size at most $2|C|(|C|+1) + 3(|C|)$ such that for each $v\in V$ there is a pair $v_-, v_+ \in V$ and an integer $m_v\le 4 |C_v|$ with 
 \begin{itemize}
  \item $f_v= f_{v_-, v_+, m_v}$, and 
  \item there is no stack-realizable walk from $v_-$ to $v_+$ in $G$ that is shorter than $m_v$.
 \end{itemize}
\end{prop}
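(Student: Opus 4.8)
The plan is to build the relaxed SBP by structural induction on the multiplicatively disjoint circuit $C$, processing gates in topological order and, for each gate $v$, designating two "port" vertices $v_-,v_+$ so that the stack-realizable walks of a prescribed minimal length $m_v$ from $v_-$ to $v_+$ have total weight exactly $f_v$. The key design decision is to use the stack to enforce *bracketing* that mimics the multiplicative-disjointness structure: a $\times$-gate will be simulated by first doing a $\push$ of a fresh symbol, then running the walk for the left child, then (after the left child's port) continuing into the walk for the right child, and finally a matching $\pop$; the realizability of the stack forces the two subcomputations to be traversed in the correct nested order and exactly once each. An input gate is simulated by a single edge (or a short $\nop$-padded path, using Proposition \ref{prop:nonopSBP}-style padding) whose weight is the constant or variable labeling it, and whose length is normalized to some fixed small value so that lengths add up predictably.

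Concretely, I would maintain the invariant that for every gate $v$ there is an even integer $m_v\le 4|C_v|$ and ports $v_-,v_+$ with $f_v=f_{v_-,v_+,m_v}$ and with $m_v$ minimal among stack-realizable $v_-$-$v_+$-walk lengths. For an input gate set $m_v$ to a constant (say $2$, routing through one intermediate vertex with a $\push$/$\pop$ of a dummy symbol so the no-$\nop$ convention is respected) and give the relevant edge the weight of the label. For a $+$-gate $v=v_1+v_2$, I would add new port vertices $v_-,v_+$ and connect them to the ports of $v_1$ and of $v_2$ by short $\nop$-free connector gadgets; the subtlety is that the two branches may have different minimal lengths $m_{v_1}\ne m_{v_2}$, so I pad the shorter branch with a "delay loop" — a self-contained realizable cycle (e.g. a $\push$-edge followed by a $\pop$-edge returning to the same vertex, or a pair of vertices looped) that can be traversed the appropriate number of extra times — and set $m_v=\max(m_{v_1},m_{v_2})+c$ for a small constant $c$. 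Here one must check that the padding does not create *spurious* shorter or alternative realizable walks of length $m_v$ contributing extra weight; giving the delay edges weight $1$ handles the weight side, and choosing the loop so its traversal count is rigidly determined by the target length handles the length side. For a $\times$-gate $v=v_1\times v_2$, introduce $v_-,v_+$, a fresh stack symbol $s_v$, and edges $v_-a$ with $\push(s_v)$, a connector from $a$ to $(v_1)_-$, an identification/connector from $(v_1)_+$ to $(v_2)_-$, a connector from $(v_2)_+$ to $b$, and $bv_+$ with $\pop(s_v)$; then $m_v=m_{v_1}+m_{v_2}+O(1)$, and multiplicative disjointness guarantees $C_{v_1}$ and $C_{v_2}$ are vertex-disjoint so the walk truly factors and $f_v=f_{v_1}f_{v_2}$.

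The size bound is then a straightforward accounting: each gate contributes $O(|C_v|)$ or, after summing, $O(|C|^2)$ vertices because of the length-normalizing delay gadgets whose size may scale with $m_v=O(|C_v|)$; I would tune the constants to land inside $2|C|(|C|+1)+3|C|$. The degree/length bound $m_v\le 4|C_v|$ follows by induction: addition adds a constant, multiplication adds the two child lengths, and since $|C_v|\ge|C_{v_1}|+|C_{v_2}|+1$ for a $\times$-gate and $|C_v|\ge\max(|C_{v_1}|,|C_{v_2}|)+1$ for a $+$-gate, the constant $4$ absorbs the overhead if the connector gadgets are chosen short enough. The main obstacle I anticipate is the *minimality* clause — proving there is no shorter stack-realizable $v_-$-$v_+$-walk than $m_v$, and relatedly that there are no unintended realizable walks of length exactly $m_v$ polluting the sum. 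This requires arguing that any realizable $v_-$-$v_+$-walk must respect the bracketing imposed by $s_v$ and the child gadgets (so it decomposes exactly as designed), which in turn needs the fresh stack symbols at $\times$-gates to be genuinely distinct and the $+$-gadget's nondeterministic choice between branches to be "clean." I expect to prove this by a second induction paralleling the construction, showing that the set of realizable $v_-$-$v_+$-walks of length $m_v$ is in weight-preserving bijection with the obvious evaluation structure of $C_v$, and that all such walks have length $\ge m_v$ because each mandatory $\push$/$\pop$ pair and each child sub-walk contributes its minimal length.
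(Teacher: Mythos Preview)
Your high-level plan---structural induction on the circuit, designating ports $v_-,v_+$ for each gate, and using fresh stack symbols to bracket subcomputations---is exactly the paper's approach. However, there is a genuine gap in your treatment of $+$-gates.

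Your delay-loop gadget violates the minimality clause that the proposition explicitly requires. If $v=v_1+v_2$ with $m_{v_1}>m_{v_2}$ and you place an optional length-$2$ loop on the $v_2$-branch, then taking that branch with \emph{zero} loop traversals yields a stack-realizable $v_-$-$v_+$-walk of length $m_{v_2}+O(1)<m_v$. You correctly note that \emph{at the target length $m_v$} the loop count is determined, but the second bullet of the statement demands that \emph{no shorter} realizable walk exist, and this fails. Worse, this is not a mere formality: minimality is precisely what the $\times$-gate argument consumes. For $v=u\times w$ one must argue that any realizable $v_-$-$v_+$-walk of length $m_v=m_u+m_w+O(1)$ decomposes into a $u$-subwalk of length exactly $m_u$ and a $w$-subwalk of length exactly $m_w$. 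If $u$ admits shorter realizable walks of length $m_u-2k$ (fewer loop traversals) and $w$ admits longer ones of length $m_w+2k$ (extra traversals), you pick up spurious cross-terms and $f_{v_-,v_+,m_v}\ne f_uf_w$. The paper avoids this by padding the shorter $+$-branch with a simple \emph{directed path} of length $m_{v_1}-m_{v_2}$ rather than a loop; this path is traversed exactly once, so minimality is preserved, and its length (up to $\le 4|C|$) is what produces the quadratic term $2|C|(|C|+1)$ in the size bound.

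A secondary point: at the $\times$-gate you wrap both children in a single fresh bracket $s_v$, and at $+$-gates you leave the connector operations unspecified. The paper instead labels \emph{every} port-to-port edge with a fresh symbol specific to the (parent,\,child) pair---$\push(vu)/\pop(vu)$ and $\push(vw)/\pop(vw)$---at both $+$- and $\times$-gates. This is what forces a walk arriving at $(v_1)_+$ to return to \emph{this} parent $v$ rather than wandering off to some other parent of $v_1$; multiplicative disjointness does not prevent a gate from being shared among several $+$-parents, so such detours are possible in $G$. With a single outer bracket and generic connectors I do not see how you exclude them; you should make the per-(parent,\,child) fresh symbols explicit.
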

\begin{proof}
 We construct $G$ iteratively along a topological order of $C$ by adding new vertices and edges, starting from the empty relaxed SBP.
 
 Let first $v$ be an input of $C$ with label $X$. We add two new vertices $v_-, v_+$ to $G$ and the edge $v_-v_+$ with weigth $w(v_-v_+) = X$ and stack-operation $\sigma(v_-v_+):=\nop$. Furthermore, $m_v:=1$. Clearly, none of the polynomials computed before change and the size of the relaxed SBP grows only by $2$. Thus all statements of the proposition of fulfilled.
 
 Let now $v$ be an addition gate with children $u,w$. By induction $G$ contains vertices $u_-, u_+, w_-, w_+$ and there are $m_u, m_v$ such that $f_{u_-,u_+, m_u}=f_u$ and $f_{w_-,w_+, m_w}=f_w$. Assume w.l.o.g.\ $m_u \ge m_w$. We add two new vertices $v_-, v_+$ to $G$. Furthermore, we add a directed path of length $m_u-m_w$ with start vertex $v_s$ and end vertex $v_t$ to $G$. We add the edges $v_-u_-$, $v_-v_s$, $v_tw_-$, $u_+v_+$ and $w_+v_+$. All edges we add get weight $1$. Furthermore, we set $\sigma(v_-u_-) := \push(vu)$, $\sigma(u_+v_+) := \pop(vu)$, $\sigma(v_-v_s) := \push(vw)$ and $\sigma(w_+v_+) := \pop(vw)$ for new stack symbols $vu$ and $vw$. All other edges we added are $\nop$-edges. Finally, set $m_v:=m_u+2$.
 
 Let us first check that $G$ computes the correct polynomials. First observe that the edges we added do not allow any new walks between old vertices, so we still compute all old polynomials by induction. Thus we only have to consider the realizable $v_-$-$v_+$-walks of length~$m_v$. Each of these either starts with the edge $v_-u_-$ or the edge $v_-v_s$. In the first case, because of the stack symbols the walk must end with the edge $u_+v_+$. Thus the realizable $v_-v_+$-walks of length $m_v$ that start with $v_-u_-$ contribute exactly the same weight as the realizable $u_-$-$u_+$-walks of length $m_u$ which is exactly $f_u$ by induction. Moreover, every $v_-v_+$-walks of length $m_v$ that start with $v_-v_s$ first makes $m_u-m_w$ unweighted steps to $w_-$ and ends with the edge $w_+v_+$. Thus, these walks contribute exactly the same as the stackrealizable $w_-$-$w_+$ walks of length $m_v-2-(m_u-m_w) = m_w$, so they contribute $f_w$. Combining all walks we get $f_{v_-, v_+, m_v}= f_u+f_w = f_v$ as desired. 
 
 We have $m_v = m_u + 2 \le 4 |C_u|+2 \le 4 |C_v|$ where the first inequality is by induction and the second inequality follows from the fact that $v$ is not contained in $C_u$ and thus $|C_v|> |C_u|$. To see the bound on $|G|$ let $s$ be the size of $G$ before adding the new edges and vertices. By induction $s\le 2(|C_v|-1)(|C_v|-1+1) + 3(|C_v|-1)$. We have added $2+m_u-m_v +1$ vertices and thus $G$ has now size $s + 3 +m_u-m_v\le s+3+m_u$. But we have $m_u\le 4 |C_u|\le 4|C_v|$ and thus the number of vertices in $G$ is at most $2(|C_v|-1)|C_v| + 3(|C_v|-1) + 3 + 4|C_v| \le 2|C|(|C|+1)+ 3(|C_v|)$. This completes the case that $v$ is an addition gate.
 
 Let now $v$ be a multiplication gate with children $u,w$. As before, $G$ already contains $u_-, u_+, w_-, w_+$ and there are $m_u, m_v$ with the desired properties. We add three vertices $v_-$, $v_+$ and $v_i$ and the edges $v_-u_-$, $u_+v_i$, $v_iw_-$ and $w_+v_+$ all with weight~$1$. The new edges have the stack symbols $\sigma(v_-u_-) := \push(vu)$, $\sigma(u_+v_i) := \pop(vu)$, $\sigma(v_iw_-) := \push(vw)$ and $\sigma(w_+v_+) := \pop(vw)$ for new stack symbols $vu$ and $vw$. Finally, set $m_v := m_u + m_w + 4$.
 
 Clearly, no stack-realizable walk between any pair of old vertices can traverse $v_-$, $v_+$ or $v_i$ and thus these walks still compute the same polynomials as before. Thus we only have to analyse the $v_-$-$v_+$-walks of length $m_v$ in $G$. Let $P$ be such a walk. Because of the stack symbols $vu$ and $vw$ the walk $P$ must have the structure $P=v_- u_-P_1u_+ v_i w_- P_2 w_+ v_+$ where $P_1$ and $P_2$ are a stack-realizable $u_-$-$u_+$-walk and a stack-realizable $w_-$-$w_+$-walk, respectively. The walk $P$ is of length $m_v$ and thus $P_1$ and $P_2$ must have the combined length $m_u+m_w$. But by induction $P_1$ must at least have length $m_u$ and $P_2$ must have at least length $m_w$, so it follows that $P_1$ has length exactly $m_u$ and $P_2$ has length exactly $m_w$. The walks $P_1$ and $P_2$ are independent and thus we have $f_{v_-,v_+, m_v}= f_{u_-,u_+, m_u}f_{w_-,w_+, m_w} = f_u f_w$ as desired.
 
 The circuit $C$ is multiplicatively disjoint and thus we have $|C_v|= |C_u| + |C_w|+1$. It follows that $m_v = m_u + m_w + 4 \le 4|C_u|+4|C_w|+4 = 4|C_v|$ where we get the inequality by induction. The relaxed SBP grows only by $3$ vertices which gives the bound on the size of $G$. This completes the proof for the case that $v$ is an addition gate and hence the proof of the lemma.
\end{proof}

Now the second direction of Theorem \ref{thm:charvp} is straight-forward.

\begin{lem}\label{lem:charvpseconddirection}
 Every family $(f_n) \in \vp$ can be computed by a family of SBPs of polynomial size.
\end{lem}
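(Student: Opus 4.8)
The plan is to combine the two structural results already assembled in this section. By Theorem~\ref{thm:vpknown}, any family $(f_n) \in \vp$ is computed by a family of multiplicatively disjoint circuits $(C_n)$ of polynomial size; say $|C_n| \le p(n)$ for a polynomial $p$. Fix $n$ and write $C = C_n$ with output gate $o$. Apply Proposition~\ref{prop:technical} to $C$ to obtain a relaxed SBP $G$ of size at most $2|C|(|C|+1) + 3|C|$, which is polynomial in $n$, together with distinguished vertices $o_-, o_+$ and an integer $m_o \le 4|C_o| = 4|C|$ such that $f_o = f_{o_-, o_+, m_o}$ and no stack-realizable $o_-$-$o_+$-walk in $G$ is shorter than $m_o$.

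The second step is to turn this relaxed SBP back into an honest (acyclic) SBP. I would invoke Lemma~\ref{lem:relaxed} with the length parameter $m = m_o$: it produces an SBP $G'_{m_o}$ of size $m_o \cdot |G|$ that computes $f_{G, m_o} = f_{o_-, o_+, m_o}$ — here I should first relabel the source and sink of $G$ to be $o_-$ and $o_+$, or equivalently note that $f_{G,m}$ in Lemma~\ref{lem:relaxed} was defined via the designated source/sink, so I take those to be $o_-$ and $o_+$. Since $m_o \le 4|C| \le 4p(n)$ and $|G|$ is polynomial in $n$, the SBP $G'_{m_o}$ has polynomial size, and it computes exactly $f_o = f_n$. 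Letting $n$ vary gives the desired family of polynomial-size SBPs, which proves the lemma.

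The one subtlety worth spelling out is the role of the minimality clause in Proposition~\ref{prop:technical}: $f_{G, m_o}$ sums only over stack-realizable walks of length exactly $m_o$, whereas a priori one might worry about shorter contributing walks. The proposition guarantees there are none below length $m_o$, so restricting to length exactly $m_o$ loses nothing and we genuinely capture $f_o$. (In fact we do not even need this for the lemma as stated, since we only claim $f_o = f_{o_-,o_+,m_o}$, but it is what makes the inductive construction in the proposition self-consistent.)

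I do not anticipate a real obstacle here — all the work was front-loaded into Proposition~\ref{prop:technical}. The only thing to be careful about is bookkeeping: confirming that both size bounds ($|G| = \mathrm{poly}(n)$ from the proposition and $m_o = O(|C|)$) multiply to something still polynomial, and matching the source/sink conventions between Proposition~\ref{prop:technical} (which speaks of arbitrary pairs $v_-, v_+$) and Lemma~\ref{lem:relaxed} (which fixes a source $s$ and sink $t$). This is a two-sentence argument once those conventions are aligned.
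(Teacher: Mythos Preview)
Your proposal is correct and follows exactly the same route as the paper: start from multiplicatively disjoint circuits via Theorem~\ref{thm:vpknown}, apply Proposition~\ref{prop:technical} to obtain a relaxed SBP with polynomial $m$, and then invoke Lemma~\ref{lem:relaxed} to unwind it into an honest SBP of polynomial size. The paper's own proof is a terse two-line version of precisely this argument; your additional remarks on the source/sink convention and the role of the minimality clause are accurate elaborations but not departures.
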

\begin{proof}
Given a family $(C_n)$ of multiplicatively disjoint arithmetic circuits of polynomial size, first turn them into relaxed SBPs of polynomial size and polynomial $m$ with Proposition \ref{prop:technical} and then turn those relaxed SBPs into SBPs with Lemma \ref{lem:relaxed}. It is easy to check that the resulting SBPs have polynomial size.
\end{proof}

\subsection{Stack branching programs with one stack symbol}

It is easy to see, that the number of symbols used in SBPs can be lowered to $2$ without loss of computational power and with only logarithmic overhead in the size (see also Section \ref{sct:widthreductionSBP}. Therefore the only meaningful restriction of the size of the symbol set is the restriction to a set only consisting of one single symbol. The following fairly straightforward lemma shows that doing so indeed decreases the computational power.
Note that Kintali proved a similar result for the Turing machine setting.

\begin{lem}
 $(f_n)\in \vpws$ if and only if it can be computed by polynomial size SBPs with one stack symbol.
\end{lem}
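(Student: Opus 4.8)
The plan is to reduce everything to the Malod--Portier characterization of $\vpws$ as the class of polynomial families computed by polynomial size ABPs; it then suffices to show that polynomial size SBPs using a single stack symbol are equivalent in power to polynomial size ABPs. One inclusion is immediate: an ABP is literally an SBP in which every edge is a $\nop$-edge, so every $s$-$t$-path of it is stack-realizable and it computes the same polynomial viewed either way, using no stack symbol at all. (If one prefers an SBP without $\nop$-edges, Proposition \ref{prop:nonopSBP} already yields one using only a single stack symbol.) Hence every $(f_n) \in \vpws$ is computed by polynomial size one-symbol SBPs.

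For the converse, let $G$ be an SBP of size $m$ using the single symbol $s$. The point is that with only one symbol the stack degenerates to a counter: the content of the stack at any moment is determined by its height alone, and since $G$ is a DAG every $s$-$t$-path has fewer than $m$ edges, so the height never exceeds $m$. I would first record the routine inductive fact that, over a one-element symbol set, a sequence of stack operations is realizable precisely when -- reading $\push$ as $+1$, $\pop$ as $-1$ and $\nop$ as $0$ -- every prefix sum is nonnegative and the total sum is $0$. Then I would define an ABP $G'$ on the vertex set $V(G) \times \{0, 1, \ldots, m\}$ with source $(s,0)$ and sink $(t,0)$ in which each edge $uv$ of $G$ induces edges $(u,h) \to (v,h')$ of weight $w(uv)$, with $h' = h+1$ for a $\push$-edge, $h' = h-1$ (requiring $h \ge 1$) for a $\pop$-edge, and $h' = h$ for a $\nop$-edge. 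Paths from $(s,0)$ to $(t,0)$ in $G'$ correspond exactly to stack-realizable $s$-$t$-paths of $G$ together with their weights, so $f_{G'} = f_G$; moreover $G'$ is acyclic (paths project to paths of $G$) and has size $O(m^2)$. Applying this to a family of polynomial size one-symbol SBPs produces a family of polynomial size ABPs, and Malod--Portier then gives $(f_n) \in \vpws$.

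The only thing requiring care is verifying that tracking the height genuinely suffices -- that is, the equivalence between realizability over one symbol and the balanced-prefix-sum condition, together with the bookkeeping that $G'$ is a bona fide ABP (acyclic, single source, single sink) computing the same polynomial. It is worth emphasizing that this construction uses crucially that there is exactly one symbol: with two or more symbols the stack content is no longer determined by its height, the analogous product would have to range over exponentially many possible stack contents, and indeed Theorem \ref{thm:charvp} shows that the unrestricted model captures all of \vp. This is precisely why restricting to a single symbol weakens the model, assuming $\vpws \ne \vp$.
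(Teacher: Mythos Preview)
Your proposal is correct and follows essentially the same approach as the paper: the easy direction views an ABP as an SBP with only $\nop$-edges, and the converse tracks the stack height explicitly by replacing each vertex $v$ with copies $(v,h)$ for $h\in\{0,\ldots,m\}$ and routing $\push$/$\pop$/$\nop$ edges to height $h+1$, $h-1$, $h$ respectively. If anything, your write-up is slightly more careful than the paper's (you handle $\nop$-edges explicitly, spell out the balanced prefix-sum characterization of realizability, and check acyclicity of $G'$), but the underlying idea is identical.
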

\begin{proof}
 The direction from left to right is easy: Simply interpret each edge $e$ of an ABP $G$ as a $\nop$-edge.

For the other direction the key insight is that if one has only one stack symbol one only has to keep track of the size of the stack at any point in the path. But this height can be encoded by vertices of an ABP. So let $G$ be a SBP of size $m$. It is clear that the stack cannot be higher than $m$ on any path through $G$. We construct an ABP $G'$ that has for every vertex $v$ in $G$ the $m+1$ vertices $v_0, v_1, \ldots v_m$. If $vu$ is a $\push$-edge in $G$, we connect $v_i$ to $u_{i+1}$ for $i = 0, \ldots , m-1$ in $G'$. If $vu$ is a $\pop$-edge in $G$, we add $v_i u_{i-1}$ for $i=1, \ldots, m$ to $G'$. All these edges get the same weight as $vu$ in the $G$. It is easy to see that every stack-realizable path $P$ in the SBP $G$ corresponds directly to a path $P'$ in the ABP $G'$ and $P$ and $P'$ have the same weight. Thus $G$ and $G'$ compute the same polynomial. Moreover, $|G'| = (m+1) |G|$ which completes the proof.
\end{proof}

\subsection{Width reduction}\label{sct:widthreductionSBP}

In this section we show that unlike for ordinary ABPs bounding the width of SBPs does not decrease the computational power: Polynomial size SBPs with at least 2 stack symbols and width $2$ can still compute every family in~\vp.

\begin{lem}
 Every family $(f_n) \in \vp$ can be computed by a SBP of width $2$ with the stack symbol set $\{0,1\}$.
\end{lem}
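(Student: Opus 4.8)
The goal is to simulate an arbitrary polynomial size SBP $G$ by a width-$2$ SBP with stack symbol set $\{0,1\}$, at the cost of only a polynomial blow-up in size. By the previous lemmas we already know a family in $\vp$ is computed by polynomial size SBPs, so it suffices to do this width reduction. The natural model to imitate is the Ben--Or--Cleve construction that simulates arithmetic formulas by width-$3$ (indeed width-$2$ after minor tricks) ABPs: there one represents a scalar value in a fixed entry of a small matrix and uses elementary-matrix-type gadgets that multiply the ``register vector'' by unipotent matrices, so that all intermediate transformations are invertible and can be undone. Here the new ingredient is that the stack lets us record, in the stack contents themselves, the extra bookkeeping that in the usual construction needs extra width.

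\textbf{Key steps.} First I would fix a convenient normal form: using Proposition~\ref{prop:nonopSBP} assume $G$ has no $\nop$-edges, and (after the remark at the start of the section) assume its symbol set is $\{0,1\}$, paying only a logarithmic factor. Second, I would set up a width-$2$ ``register'' semantics: a width-$2$ layered SBP between two designated layers computes, for each stack behaviour, a $2\times 2$ matrix of polynomials over $\mathbb F$ describing the transition from the two source vertices to the two sink vertices. The plan is to build gadgets, each a small width-$2$ SBP, realizing (i) adding a variable or constant times one register into the other (an elementary unipotent transformation), and (ii) a ``bracketed product'' gadget: to compute $f\cdot g$ from sub-SBPs for $f$ and $g$, one wants to run the $f$-gadget, then the $g$-gadget, then the \emph{inverse} of the $f$-gadget, composing so that the net effect on the register is multiplication, exactly as in the commutator identity $[\text{add }f,\ \text{add }g] $ used by Ben--Or--Cleve. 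The role of the stack is twofold: it is inherited from $G$ (we must faithfully carry $G$'s own push/pop labels through the gadgets), and it is also used to guarantee that the ``undo'' part of each product gadget is matched to the correct forward part, by pushing a private marker symbol before the forward block and popping it after the undo block. Since we only have symbols $\{0,1\}$, a ``private marker'' is encoded as a short binary string pushed symbol-by-symbol along a width-$2$ path; distinct gadgets get distinct codes, and nesting depth is polynomial so codes have logarithmic length, keeping the size polynomial.

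\textbf{Main obstacle.} The delicate point is reconciling $G$'s stack operations with the stack bookkeeping introduced by the width reduction. In the width-$2$ simulation a single edge of $G$ becomes a gadget that the walk may traverse ``forward'' and later ``backward'' (in the undo phase of an enclosing product), so $G$'s own push on that edge would be executed twice and in the wrong order. The fix I would use is to peel $G$'s stack discipline apart from the arithmetic: first replace $G$ by an equivalent relaxed/standard SBP in which arithmetic is done by $\nop$-edges only while a separate, already width-bounded ``skeleton'' handles pushes and pops --- concretely, push/pop edges of $G$ carry weight $1$, so they can be kept as isolated width-$2$ segments that are \emph{not} part of any product gadget's undo block, and only the weight-carrying ($\nop$) edges get compiled into the register gadgets. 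One then has to check that interleaving the (weight-$1$) stack skeleton of $G$ with the (extra-symbol) bookkeeping stack of the gadgets still yields a realizable-walk semantics that sums to $f_G$; this is the step where one must be careful that the two stack disciplines are properly nested rather than interleaved, which can be arranged because each gadget and each original push/pop edge is a contiguous block. Granting that, the size bound is immediate: each of the $O(p(n))$ edges of $G$ expands to a gadget of size polynomial in $p(n)$ (logarithmic-length markers, polynomial nesting), so the resulting width-$2$ SBP has polynomial size and, by Theorem~\ref{thm:charvp}, captures all of $\vp$.
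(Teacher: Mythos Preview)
Your proposal has a genuine structural gap. The Ben--Or--Cleve construction is a \emph{recursive} compilation of a circuit (or formula): at each internal gate one builds a gadget out of the gadgets for the two children, and the commutator trick $[\,\text{add }f,\ \text{add }g\,]$ presupposes that $f$ and $g$ are already available as sub-gadgets. You begin instead with an SBP $G$, which has no such recursive product structure---it is a sum over paths, and there are no canonical ``sub-SBPs for $f$ and $g$'' to feed into a bracketed-product gadget. The sentence ``each of the $O(p(n))$ edges of $G$ expands to a gadget'' makes it look like an edge-by-edge simulation, but then the commutator idea has nothing to act on. If instead you meant to start from a log-depth circuit for $f_n$ and run Ben--Or--Cleve on \emph{that}, then the whole discussion of ``$G$'s own stack discipline'' and the ``main obstacle'' paragraph are beside the point, since there is no ambient SBP whose pushes and pops need to be preserved.

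There is also a second, independent problem: Ben--Or--Cleve uses width~$3$, and the parenthetical ``indeed width~$2$ after minor tricks'' is not something you can assume---over general fields width-$2$ ABPs (without a stack) are strictly weaker than width~$3$. If the stack is what buys you the missing register, that is precisely the mechanism you must spell out, and it is not the commutator identity. Finally, the proposed fix for the ``undo'' problem does not work as stated: even after separating weights from stack operations, any sub-gadget that you run forward and then backward will traverse $G$'s weight-$1$ push/pop skeleton twice in opposite orders, and reversing a realizable segment does not in general yield a realizable segment, so the two stack disciplines will not nest.

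For contrast, the paper's argument avoids Ben--Or--Cleve entirely. It enlarges the stack alphabet to $S\cup E$ (the edges of $G$), lays out one width-$2$ gadget $G_{e,e'}$ for every pair of consecutive edges $e,e'$ of $G$ in a fixed linear order, and uses the stack to carry the single symbol ``which edge of $G$ should be traversed next.'' Each gadget offers two parallel length-$3$ paths: a ``weighted'' path that pops $e$, performs $\sigma(e)$ with weight $w(e)$, and pushes $e'$; and a do-nothing bypass. A stack-realizable $s$--$t$ path in the width-$2$ program is then forced to pick exactly the gadgets corresponding to a stack-realizable path in $G$, giving a weight-preserving bijection. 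Only afterwards is the alphabet reduced to $\{0,1\}$ by binary-encoding each symbol along a short path. This is a sequentialisation trick, not a register/commutator trick, and it sidesteps both obstacles above.
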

\begin{proof}
 The idea of the proof is to start from the characterization of $\vp$ by SBPs from Theorem \ref{thm:charvp}. We use the stack to remember which edge will be used next on a realizable path through the branching program. We will show how this can be done with width $2$ SBPs with a bigger stack symbol size. In a second step we will seee how to reduce the stack symbol set to $\{0,1\}$.
 
 So let $(G_n)$ be a family of SBPs. Fix $n$ and let $G:=G_n$ with vertex set $V$ and edge set $E$. Furthermore, let $w$ be the weight function, $\sigma$ the stack operation labeling and $S$ the stack symbol of $G$. Let $s$ and $t$ be the source and the sink of the SBP $G$. We assume without loss of generality that $s$ has one single outgoing edge $e_s$. Furthermore $t$ is only entered by one $\nop$-edge $e_t$ with weight $1$. We will construct a new SBP $G'$ with weight function $w'$ and stack operation labeling $\sigma'$. $G'$ will have stack symbol set $S\cup E$. For each edge $e$ with a successor edge $e'$ the SBP$G'$ contains a gadget $G_{e,e'}$. The vertex set of $G_{e,e'}$ is $\{v_{e,e'}^1, v_{e,e'}^2, v_{e,e'}^3, v_{e,e'}^4, v_{e,e'}^5, v_{e,e'}^6\}$. These vertices are connected to a DAG by the edges $\{v_{e,e'}^1v_{e,e'}^2, v_{e,e'}^1v_{e,e'}^3, v_{e,e'}^2v_{e,e'}^4, v_{e,e'}^3v_{e,e'}^5, v_{e,e'}^4v_{e,e'}^6, v_{e,e'}^5v_{e,e'}^6\}$. All these edges have weight~$1$ except for $v_{e,e'}^2v_{e,e'}^4$ for which we set $w'(v_{e,e'}^2v_{e,e'}^4):=w(e)$. We call $v_{e,e'}^2v_{e,e'}^4$ the \emph{weighted edge} of $G_{e,e'}$. Furthermore we set $\sigma(v_{e,e'}^1v_{e,e'}^2):=\pop(e)$, $\sigma(v_{e,e'}^2v_{e,e'}^4):=\sigma(e)$, $\sigma(v_{e,e'}^4v_{e,e'}^6):=\push(e')$. All other edges are $\nop$-edges. The construction of $G_{e,e'}$ is illustrated in Figure \ref{fig:gadget}.
 
 \makeatletter
 \tikzoption{above of}{\tikz@of{#1}{90}}%
\tikzoption{below of}{\tikz@of{#1}{-90}}%
\tikzoption{left of}{\tikz@of{#1}{180}}%
\tikzoption{right of}{\tikz@of{#1}{0}}%
\tikzoption{above left of}{\tikz@of{#1}{135}}%
\tikzoption{below left of}{\tikz@of{#1}{-135}}%
\tikzoption{above right of}{\tikz@of{#1}{45}}%
\tikzoption{below right of}{\tikz@of{#1}{-45}}%
\def\tikz@of#1#2{%
  \def\tikz@anchor{center}%
  \let\tikz@do@auto@anchor=\relax%
  \tikz@addtransform{%
    \expandafter\tikz@extract@node@dist\tikz@node@distance and\pgf@stop%
    \pgftransformshift{\pgfpointpolar{#2}{\tikz@extracted@node@distance}}}%
  \def\tikz@node@at{\pgfpointanchor{#1}{center}}}
\def\tikz@extract@node@dist#1and#2\pgf@stop{%
  \def\tikz@extracted@node@distance{#1}}
  \makeatother

\begin{figure}[t]
 \begin{center}
 \begin{tikzpicture}[->,>=stealth',shorten >=1pt,auto,node distance=2.8cm,
                    semithick]

  \node[state]         (A)                    {$v_{e,e'}^1$};
  \node[state]         (D) [right of=A] {$v_{e,e'}^3$};
  \node[state]         (B) [above of=D] {$v_{e,e'}^2$};
  \node[state]         (C) [right of=B] {$v_{e,e'}^4$};
  \node[state]         (E) [right of=D]       {$v_{e,e'}^5$};
  \node[state]         (F) [right of=E]       {$v_{e,e'}^6$};

  \path (A) edge              node {$\pop(e)$} (B)
            edge              node {} (D)
        (B) edge              node {$w(e)/\sigma(e)$} (C)
        (C) edge              node {$\push(e')$} (F)
        (D) edge              node {} (E)
        (E) edge              node {} (F);
\end{tikzpicture}
 \caption{The gadget $G_{e,e'}$. We illustrate only the weight of the weighted edges. All edges without stack operation label are $\nop$-edges.}\label{fig:gadget}
 \end{center}
 \end{figure}
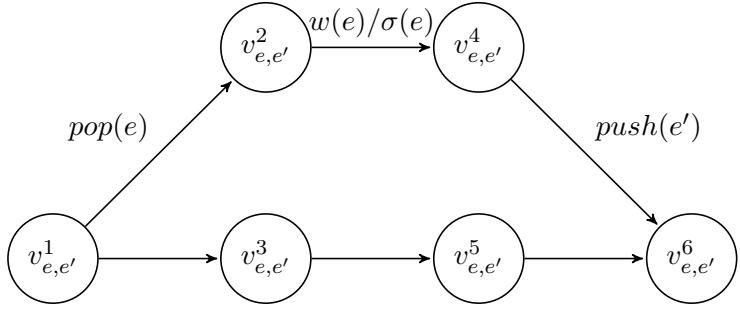
 
 Now choose an order $\le_E$ of $E$ such that for each pair $uv, vw\in E$, the edge $uv$ comes before $vw$. This order can be iteratively constructed from a topological order~$\le_V$ of~$V$: For each vertex $v$ along $\le_V$ iteratively add the edges entering $v$ to $\le_E$ as the new maximum. From $\le_E$ we construct an order $\le_G$ of the gadgets $G_{e,e'}$ by defining \[G_{e_1,e_2} \le_G G_{e_3,e_4} \leftrightarrow e_1< e_3 \vee (e_1 = e_3 \land e_2 < e_4).\] 
 
 We now connect the gadgets along the order $\le_G$ in the following way: Let $G_{e_1,e_2}$ and $G_{e_3,e_4}$ be two successors in $\le_G$. We connect $v_{e_1,e_2}^6$ to $v_{e_3,e_4}^1$ by a $\nop$-edge of weight~$1$. Let $G_{e,e'}$ be the minimum of $\le_G$. We add a new vertex $s$ and the edge $s v_{e,e'}^1$ with weigth $1$ and stack opeation $\sigma(sv_{e,e'}^1):= \push(e_s)$ where $e_s$ is the single outgoing edge of $s$ in $G$. Let now  $G_{e,e'}$ be the maximum gadget in $\le_G$. We add a new vertex $t$ and the edge $v_{e,e'}t$ with weight $1$ and stack operation $\pop(e_t)$. This concludes the construction of $G'$.
 
 It is easy to see that $G'$ has indeed width $2$. Thus we only need to show that $G$ and $G'$ compute the same polynomial. This will follow directly from the following claim:
 
 \begin{clm}
  There is a bijection $\pi$ between the stack-realizable paths in $G$ and $G'$. Furthermore $w(P):= w'(\pi(P))$ for each stack-realizable path in $G$.
 \end{clm}
\begin{proof}
Clearly every $s$-$t$-path must traverse all gadgets in $G'$. Furthermore, whenever a gadget is entered, the stack contains only one symbol from $E$ which lies at the top of the stack. Through each gadget $G_{e,e'}$ there are exactly the two paths $v_{e,e'}^1v_{e,e'}^2v_{e,e'}^4v_{e,e'}^6$ and $v_{e,e'}^1v_{e,e'}^3v_{e,e'}^5v_{e,e'}^6$. We call the former the \emph{weighted path} through $G_{e,e'}$. For a stack-realizable $s$-$t$-path $P=e_1 e_2 \ldots e_k$ through $G$ we define $\pi(P)$ to be the unique path through $G'$ that takes the weighted path through exactly the gadgets $G_{e_i, e_{i+1}}$ for $i=1, \ldots, k=1$. We have $w(P):= w'(\pi(P))$ with this definition, because only the weighted edges in the gadgets have a weight different from $1$ in $G'$. So it suffices to show that $\pi$ is indeed a bijection.

We first show that $\pi$ maps stack-realizable paths in $G$ to stack-realizable paths in $G'$. So let~$P$ be as before. Observe that~$\pi(P)$ traverses the gadgets $G_{e_i, e_{i+1}}$ in the same order as~$P$ traverses the edges~$e_i$. Furthermore, whenever $\pi(P)$ enters a gadget $G_{e_i, e_{i+1}}$ the top stack symbol is $e_i$ and the rest of the stack content is exactly that on~$P$ before traversing~$e_i$. When leaving $G_{e_i, e_{i+1}}$ the stack content is that after traversing~$e_i$ on~$P$ with an additional symbol~$e_{i+1}$ on the top. Thus all stack operations along~$\pi(P)$ must be legal and the stack is empty after traversing the last edge towards~$t$. Thus~$\pi(P)$ is indeed stack-realizable.

Clearly, $\pi$ is injective, so to complete the proof of the claim we only need to show that it is surjective. So let $P'$ be a stack-realizable $s$-$t$-path in $G'$. Let $G_{e_1, e_1'}, \ldots , G_{e_k, e_k'}$ be the gadgets in which $P'$ takes the weighted path in the order in which they are visited. We claim that $e_se_1\ldots e_k$ is a stack-realizable $s$-$t$-path. Clearly, $s$ is the first vertex of $P$. Also in $P'$ the symbol $e_t$ is popped in the last step by construction of $G'$, so the last gadget in which $P'$ took a weighted path must be one of the form $G_{e, e_t}$, because otherwise $e_t$ cannot be the top symbol on the stack before the last step. Thus $t$ is the last vertex of $P$.

To see that $P$ is a path, observe that we have $e_i' = e_{i+1}$. Otherwise $P$ cannot have the right top symbol when taking the weighted path in $G_{e_{i+1}, e_{i+1}'}$. Thus $e_{i+1}$ must be a successor of $e_i$ in $G$ and $P$ is an $s$-$t$-path.

To see that $P$ is stack-realizable observe that when $P'$ traverses the weighted edge of a gadget $G_{e_i, e_i'}$ it has the same stack content as when $P$ traverses $e_i$ in $G$. So $P$ is obviously stack-realizable because $P'$ is.

Observing that obviously $w(P)= P'$ by construction completes the proof.
\end{proof}

In a final step we now reduce the stack symbol size to $\{0,1\}$ in a straightforward way. Let $\ell:=\lceil\log(|S\cup E|)\rceil$, then each stack symbol $s$ can be encoded into a $\{0,1\}$-string $\mu(s)$ of length~$\ell$. Now we substitute each edge $e$ of $G'$ by a path $P_e$ of length~$\ell$. If $\sigma'(e)=\push(s)$ we the edges along $P_e$ are $\push$-edges, too, that push $\mu(e)$ onto the stack. If $\sigma'(e)=\pop(s)$ we pop $\mu(s)$ in reverse order along $P_e$. If $e$ is a $\nop$-edge, all edges of $P_e$ are $\nop$-edges, too. Finally, we give one of the edges in $P_e$ the weight $w'(e)$, while all other edges get weight $1$. Doing this for all edges, it is easy to see that the resulting SBP computes the same polynomial as $G'$. Furthermore, its width is $2$.
\end{proof}

\subsection{Depth reduction}

In this section we show that the characerization of \vp\;by SBPs allows us to directly use results from counting complexity that rely on NAuxPDAs. We demonstrate this by adapting a proof by Niedermeier and Rossmanith \cite{NR95} to reprove the classical parallelization theorem for \vp\;originally proved by Valiant et al. \cite{VSBR83}. While neither the result nor the proof technique is new in itself, we argue that the use of applying the techniques using SBPs results in a proof that is arguably more transparent than any other proof of this classical theorem that we know. This raises our hopes that the SBP characterization of \vp\; may be helpful in the future.

We now start presenting the ideas of Niedermeier and Rossmanith in detail. The basic idea is the following: The realizable paths are recursively cut into subpaths and the polynomials are then computed by combining the polynomials of the subpaths. In order to reach logarithmic depth we have to make sure that the paths are cut in paths of approximately equal length to result in a balanced computation. This is complicated by that fact that the paths have to be realizable, so we have to account for the content of the stack during the computation.

We now give the details of the construction. Let $G$ be an SBP and let $P$ be a realizable path in $G$ from $a$ to $b$. Let $c$ be a vertex on $P$, then the stack height of $P$ in $c$ is the number of $\push$-edges minus the number of $\pop$-edges on $P$ from $a$ up to $c$. Similarly to Niedermeier and Rossmanith we give to a path $P$ a description $(a,b,i)$, where $a$ is the start vertex, $b$ the end vertex and $i$ the length of $P$.

We define a relation $\vdash$ on paths in order to decompose them. Let $P$ be a path with realizable subpaths $P_1$ and $P_2$ and let these three paths have the descriptions $(a,b,i)$, $(c,d,j)$ and $(e,b,k)$. Then we write $P_1, P_2 \vdash P$ iff
\begin{itemize}
 \item the stack height of $P$ on $e$ is $0$
 \item there is an $s\in S$ such that $\sigma(ac)= \push(s)$ and $\sigma(de) = \pop(s)$ and
 \item $j+k = i-2$.
\end{itemize}

We state a Lemma by Niedermeier and Rossmanith:
\begin{lem}\label{lem:dec1}
 Let $P$ be a path with description $(a,b,i)$ and $i\ge 2$. Then there exist uniquely described subpaths $P_1$, $P_2$ and $P_3$ with descriptions $(c,d,i_1)$, $(e,f,i_2)$ and $(g,d,i_3)$ with $i_2, i_3 \le i/2 < i_1$ such that $P_2, P_3 \vdash P_1$.
\end{lem}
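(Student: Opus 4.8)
The plan is to exploit the Dyck-word structure of the $\push$/$\pop$ operations along a realizable path. First I would reduce to the case that $G$ has no $\nop$-edges, using Proposition~\ref{prop:nonopSBP}; then along any realizable path of length at least $2$ the first edge is a $\push$, the last edge its matching $\pop$, and every $\push$-edge has a unique matching $\pop$-edge, so $P$ decomposes uniquely as a concatenation $P = B_1B_2\cdots B_t$ of \emph{blocks}, where each $B_j$ is a contiguous realizable subpath of some length $\ell_j\ge 2$ whose first edge is matched by its last edge, and the \emph{interior} of $B_j$ (delete these two edges) again decomposes into blocks recursively. This is the usual ordered forest, with block length playing the role of subtree weight, and $\sum_j\ell_j=i$. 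The point is that the objects required by $P_2,P_3\vdash P_1$ are exactly read off such a block: if $B$ is a block and $P_3$ is a (possibly empty) run of blocks immediately following $B$ at the same level, then setting $P_1:=BP_3$, $P_2:=$ interior of $B$, one has $i_1=\ell(B)+\ell(P_3)$, $i_2=\ell(B)-2$, $i_3=\ell(P_3)$, and the three conditions defining $\vdash$ hold (the first edge of $P_1$ is a $\push$ matched by the last edge of $P_2$; the vertex $g$ starting $P_3$ is the end of $B$, where the stack height of $P_1$ is $0$; and $i_2+i_3=i_1-2$).

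So the whole problem is to choose $B$ and $P_3$ with $i_2,i_3\le i/2<i_1$, and I would do this in two phases. \textbf{Descent.} Maintain a list $\mathcal L=(C_1,\dots,C_m)$ of consecutive sibling blocks whose lengths sum to more than $i/2$, starting from the top-level blocks $B_1,\dots,B_t$ (whose lengths sum to $i>i/2$). As long as some $C_j$ has interior length exceeding $i/2$, i.e.\ $\ell(C_j)>i/2+2$, replace $\mathcal L$ by the block decomposition of the interior of the \emph{first} such $C_j$; its total length is $\ell(C_j)-2>i/2$, so the invariant is preserved, and since the total length strictly decreases the process terminates with a list all of whose blocks have length at most $i/2+2$. \textbf{Windowing.} Write $\Sigma_r=\ell(C_1)+\cdots+\ell(C_r)$, $\Sigma_0=0$, $\Sigma_m>i/2$, and let $b$ be least with $\Sigma_b>i/2$. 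If $\ell(C_b)>i/2$, take $B:=C_b$ and $P_3$ empty: then $i_1=\ell(C_b)>i/2$, $i_2=\ell(C_b)-2\le i/2$, $i_3=0$. Otherwise $0<\Sigma_b-i/2\le\Sigma_{b-1}$ (and $b\ge 2$), so there is a unique $a\in\{1,\dots,b-1\}$ with $\Sigma_{a-1}<\Sigma_b-i/2\le\Sigma_a$; take $B:=C_a$ and $P_3:=C_{a+1}\cdots C_b$, so that $i_1=\Sigma_b-\Sigma_{a-1}>i/2$, $i_3=\Sigma_b-\Sigma_a\le i/2$, and $i_2=\ell(C_a)-2\le i/2$. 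In both cases one checks directly that $P_2,P_3\vdash P_1$.

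For the uniqueness clause I would note that every choice above is forced once a tie-breaking rule is fixed (the \emph{first} too-long block, the \emph{least} such $b$, the \emph{unique} such $a$), so $P_1,P_2,P_3$, and hence their descriptions, are determined by $P$.

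The hard part will be the windowing step: the naive choice $B:=C_1$ fails as soon as a later block is long, and one has to see that descending on \emph{every} too-long block of the current list (not merely the leftmost) together with the partial-sum argument simultaneously controls all of $i_1,i_2,i_3$; the bookkeeping with the additive constant $2$ contributed by the $\push$/$\pop$ edges of a block — and the slight asymmetry it creates between $i_2$ and $i_3$ — is where care is needed. A secondary but essential point is the $\nop$-removal at the start: over an SBP with $\nop$-edges a realizable path of length $\ge 2$ need not contain any $\push$-edge at all, so the statement would fail as literally written and Proposition~\ref{prop:nonopSBP} is genuinely used.
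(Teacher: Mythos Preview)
The paper does not give its own proof of this lemma; it is quoted from Niedermeier and Rossmanith~\cite{NR95} and used as a black box in the depth-reduction argument. So there is no ``paper's proof'' to compare against, and what matters is whether your argument is sound and usable for Lemma~\ref{lem:parallel}.

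Your existence argument is correct. After removing $\nop$-edges (a point you rightly flag as genuinely necessary), the Dyck structure gives the ordered block forest, and your two-phase procedure produces a realizable subpath $P_1$ with $i_1>i/2$ whose first block has interior $i_2\le i/2$ and whose tail $P_3$ has length $i_3\le i/2$; the verification of the three $\vdash$-conditions is straightforward. The parity issue ($i$ even once $\nop$-edges are gone) keeps the $+2$ bookkeeping honest.

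One remark on uniqueness, which you handle but perhaps underplay: the bare length constraints $i_2,i_3\le i/2<i_1$ do \emph{not} pin down $P_1$ on their own. For instance, if $P$ consists of three top-level blocks of lengths $4,2,4$ (so $i=10$), then both $P_1=B_1B_2$ and $P_1=B_2B_3$ satisfy all the inequalities. So ``uniquely described'' must be read as ``canonically determined by $P$'', and your tie-breaking (first too-long block, least $b$, unique $a$) supplies exactly that. This is what Lemma~\ref{lem:parallel} needs: a map from paths to decompositions, so that summing over decompositions counts each path once. It is worth saying this explicitly rather than leaving the impression that uniqueness is a consequence of the inequalities.

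For comparison, the decomposition Niedermeier and Rossmanith use is slightly different and arguably cleaner: starting from $P_1:=P$, repeatedly replace $P_1$ by the interior of its first block if that interior has length $>i/2$, else by $P_1$ minus its first block if that remainder has length $>i/2$, else stop. This single recursive peel-or-descend step maintains $|P_1|>i/2$ and terminates with $i_2,i_3\le i/2$, avoiding the separate windowing phase. On the $4,2,4$ example it outputs $B_2B_3$, whereas your procedure outputs $B_1B_2$; both are legitimate canonical choices.
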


Lemma \ref{lem:dec1} allows us to cut a path $P$ into three parts $P_2$, $P_3$ and $P-P_1$. None of these parts is too big, but we cannot iterate this procedure, because unfortunately $P-P_1$ is not a path because it has a ``gap'' from $c$ to $d$. To remedy the situation Niedermeier and Rossmanith formalize this notion of a path with gap in the following way: A \emph{path with gap} with description $(a, (c,d, j), b, i)$ consists of two paths, one from $a$ to $c$ and one from $d$ to $b$, where $i$ and $j$ with $i\ge j$ are even natural numbers. $P$ is realizable, if identifying $c$ and $d$ results in a realizable path of length $i-j$. Observe that $P$ with description $(a, (c,d,i), b,i)$ is realizable if and only if $a=c$ and $b=d$, i.e. the path consists only of a gap.

We now extend the relation $\vdash$ to paths with gaps. This is complicated a little by the fact that the gap can lie in either of the two subpaths that we want to split a path with gap into. So let $P$ be a path with gap and description $(a, (c,d,j), b,i)$. For the first case let $P_1$ be a subpath with gap and description $(e,(c,d,j),f,k)$ and $P_2$ be a subpath with description $(g,b,l)$. For the second case let $P_1$ be a subpath with description $(e,f,k)$ and $P_2$ a subpath with gap and description $(g,(c,d,j),b,l)$. Then we write $P_1, P_2 \vdash P$ if and only if the stack height $g$ is $0$, there is an $s\in S$ such that $\sigma(ac)= \push(s)$ and $\sigma(de) = \pop(s)$ and $k+l = i-2$. Observe that if $c=d$ and $j=0$ this definition coincides with the definition on paths without gap.

Niedermeier and Rossmanith give a version of Lemma \ref{lem:dec1} for paths with gap.
\begin{lem}\label{lem:dec2}
 Let $(a, (c,d,j), b,i)$ with $i-j \ge 2$ be a realizable path with gap. Then there exist uniquely determined paths $P_1$, $P_2$ and $P_3$ such that $P_1$ has the description $(e, (c,d,j), f, i_1)$, $P_2, P_3 \vdash P_1$ and either 
\begin{enumerate}
 \item $P_2$ has the description $(g,(c,d,j), h, i_2)$ and $P_3$ has the description $(k,f, i_3)$ such that $i_2-j \le (i-j)/2 < i_1 -j$ or 
\item $P_2$ has the description $(g, h, i_2)$ and $P_3$ has the description $(k,(c,d,j), f, i_3)$ such that $i_3-j \le (i-j)/2 < i_1 -j$. 
\end{enumerate}
\end{lem}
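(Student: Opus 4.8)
The plan is to mirror the proof of Lemma~\ref{lem:dec1} but carry the gap along as an inert parameter, branching on which of the two sub‑paths the gap falls into. First I would recall how the no‑gap case works: a realizable path with description $(a,b,i)$, $i\ge 2$, must begin with a $\push$‑edge $ac$ carrying some $s\in S$; following the standard bracket‑matching argument, there is a unique first vertex $e$ after which the stack height returns to $0$, i.e.\ a unique $\pop$‑edge $de$ with $\sigma(de)=\pop(s)$ such that the stack height of $P$ at $e$ is $0$. This splits $P$ into the inner path from $c$ to $d$ and the tail from $e$ to $b$, giving a relation of the form $P_2',P_3'\vdash P$; one then applies this splitting repeatedly, always descending into whichever of the two pieces is ``too long,'' until one first reaches a piece $P_1$ of length $i_1$ with $i_1 \le i/2$ but whose parent had length $>i/2$ — exactly the numerics in Lemma~\ref{lem:dec1}.

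For the present statement the only change is bookkeeping. I would run the very same iterative splitting on the realizable path with gap $(a,(c,d,j),b,i)$, but now measuring ``length'' by the \emph{effective length} $i-j$ (the length of the realizable path obtained by contracting the gap), and at each step tracking which of the two produced sub‑paths contains the gap $(c,d,j)$. At every split the extended relation $\vdash$ on paths with gaps, as defined just above the statement, is exactly what records the split, and the condition ``$c=d,j=0$ coincides with the no‑gap case'' guarantees that the numerics propagate as before. One continues descending into whichever sub‑path has the larger effective length; because effective length strictly drops, after finitely many steps one reaches a path‑with‑gap $P_1$ of effective length $i_1-j \le (i-j)/2$ whose immediate parent still had effective length $>(i-j)/2$. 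That parent, call it with description $(e,(c,d,j),f,i_1')$ before the final split — I would relabel so the final split is $P_2,P_3\vdash P_1$ in the statement's notation — yields $P_2,P_3$, and whichever of $P_2,P_3$ inherited the gap satisfies case~(1) or case~(2) respectively, with the bound $i_2-j\le(i-j)/2<i_1-j$ or $i_3-j\le(i-j)/2<i_1-j$ coming directly from the stopping rule. Uniqueness at each step — the unique matching $\pop$ with stack height $0$, the unique choice of which branch to recurse on since effective lengths are totally ordered and the ``first time below half'' is well defined — gives uniqueness of $P_1,P_2,P_3$ overall.

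The main obstacle, and the only place that needs genuine care rather than transcription, is the \textbf{case analysis on the location of the gap through the recursion}: when one descends into the sub‑path containing the gap, the roles of ``outer description'' and ``gap description'' must be kept straight, and one has to check that the condition ``stack height $0$ at the splitting vertex'' in the definition of $\vdash$ for paths with gaps is the correct invariant to maintain — i.e.\ that contracting the gap does not disturb the stack‑height computation used to locate the matching $\pop$‑edge. This is true because the gap, by definition of realizability for paths with gaps, contributes a balanced (stack‑neutral) block once contracted, so push/pop counts on either side are unaffected; I would state this as a one‑line observation and then the two cases of the conclusion fall out symmetrically. Everything else — existence of the matching bracket, the $i/2$ vs.\ $i_2,i_3$ inequalities, uniqueness — is routine and identical in form to Lemma~\ref{lem:dec1}, so I would present the proof as ``repeat the argument of Lemma~\ref{lem:dec1}, carrying the gap as a parameter and splitting into the two cases according to which sub‑path it lies in,'' filling in only the gap‑neutrality observation explicitly.
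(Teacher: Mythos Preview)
First, a small note: the paper does not supply its own proof of this lemma; it merely quotes it from Niedermeier and Rossmanith~\cite{NR95}. So there is nothing in the paper to compare against directly, and I evaluate your argument on its own.

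Your overall plan --- mirror the iterative bracket-matching descent of Lemma~\ref{lem:dec1} while carrying the gap along --- is the right one, and your remark that the contracted gap is stack-neutral (so the $\vdash$ split is well-defined on paths with gap) is exactly the one nontrivial observation needed. The flaw is your \emph{descent rule}. You write that one ``continues descending into whichever sub-path has the larger effective length,'' but the larger sub-path need not be the one carrying the gap, and the lemma insists that $P_1$ have description $(e,(c,d,j),f,i_1)$, i.e.\ contain the gap. For a concrete obstruction, take any realizable path in which the gap sits inside a short initial balanced block followed by a long gap-free tail; at the very first $\vdash$ split your rule recurses into the tail, and the $P_1$ you eventually produce carries no gap at all. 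The correct rule is to always descend into the sub-path \emph{containing the gap}, stopping the first time that sub-path has effective length at most $(i-j)/2$; the current path at that moment is $P_1$. This also explains why the lemma bounds only the gap-carrying child ($i_2-j$ in case~1, $i_3-j$ in case~2) and says nothing about the other child: with the correct rule the non-gap piece of the final split is genuinely uncontrolled, and that is harmless for the depth reduction because part~(a) of Lemma~\ref{lem:parallel} halves it at the next level. Once you swap the descent rule, the rest of your sketch --- termination because effective length strictly drops, uniqueness from uniqueness of the matching $\pop$ and of the gap's location --- goes through unchanged.
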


Let $P$ be a realizable path with gap with description $(a,(c,d, j), b,i)$. Then we define its weight $w(P):= w(P')$ where $P'$ is the realizable path we get from $P$ when we identify $c$ and $d$. Let $w(a,b,i):= \sum_P w(P)$ where the sum is over all realizable paths with description $(a,b,i)$. Furthermore, $w(a,(c,d,j),b,i):= \sum_P w(P)$ where the sum is over all realizable paths with gap with description $(a,(c,d,j),b,i)$. With these definitions and the Lemmas \ref{lem:dec1} and \ref{lem:dec2} we get the following Lemma:

\begin{lem}\label{lem:parallel}
\begin{enumerate}
 \item[a)] \[w(a,b,i) = \sum w(a, (c,d,j),b,i) w(e,f,i_1) w(g,d,i_2) w(ce) w(fg)\]
where the sum is over all $c,d,e,f,g\in V(G)$ such that there is an $s$ with $\sigma(ce) = \push(s)$ and $\sigma(fg)=\pop(s)$ and all even numbers $j, i_1, i_2$ with $i_1, i_2 \le i/2 < j$ and $i_1 + i_2 = j-2$.
\item[b)] \begin{align*}
 w(a,(c,d,j),b,i) \\=&\sum w(a, (c_1, d_1, j_1), b, i) w(e, (c,d,j), f, i_1) w(g,d_1, i_2) w(c_1e) w(fg)\\
&+\sum w(a, (c_1, d_1, j_1), b, i) w(e, f, i_1) w(g,(c,d,j), d_1, i_2)w(c_1e) w(fg)
\end{align*}
where both sums are over all $c_1,d_1,e, f, g\in V(G)$ such that there is an $s$ with $\sigma(c_1e) = \push(s)$ and $\sigma(fg)=\pop(s)$. The first sum is also over all even numbers $j_1, i_1, i_2$ with $i_1-j \le (i-j)/2 < j_1-j$ and $i_1 + i_2 = j_1-2$, while the second sum is over all even numbers $j_1, i_1, i_2$ with $i_2-j \le (i-j)/2 < j_1-j$ and $i_1 + i_2 = j_1-2$.
\end{enumerate}
\end{lem}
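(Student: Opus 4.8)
The plan is to derive both identities directly from the decomposition Lemmas~\ref{lem:dec1} and~\ref{lem:dec2} by summing over all paths and regrouping according to the (uniquely determined) pieces produced by those lemmas. For part~a), I would start from the left-hand side $w(a,b,i) = \sum_P w(P)$ over all realizable paths $P$ with description $(a,b,i)$. Since $i\ge 2$, Lemma~\ref{lem:dec1} assigns to each such $P$ a unique triple of subpaths $P_1,P_2,P_3$ with descriptions $(c,d,i_1')$, $(e,f,i_2)$, $(g,d,i_3)$ satisfying $P_2,P_3\vdash P_1$ and $i_2,i_3\le i/2 < i_1'$. The key observation is that $P$ is reconstructed from the data: the ``path with gap'' $P - P_1$ having description $(a,(c,d,i_1'),b,i)$, the two subpaths $P_2 = (e,f,i_2)$ and $P_3=(g,d,i_3)$, and the two connecting edges $ce$ and $fg$ (with $\sigma(ce)=\push(s)$, $\sigma(fg)=\pop(s)$, and stack height $0$ at $g$, which is exactly the content of $P_2,P_3\vdash P_1$). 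Conversely, any choice of a realizable path with gap of the stated description together with realizable $P_2,P_3$ and a matching push/pop edge pair glues back to a unique realizable path of description $(a,b,i)$, because identifying the gap endpoints reinserts $P_1$. Since $w$ is multiplicative along concatenation and $w(P)=w(P-P_1)\,w(P_2)\,w(P_3)\,w(ce)\,w(fg)$ (here $w(P-P_1)$ is the weight of the path-with-gap, i.e.\ the weight after identifying $c$ and $d$, which accounts for everything on $P$ outside $P_1$ except the two glue edges), interchanging the sum over $P$ with the sum over the decomposition data yields exactly the claimed product, with $j := i_1'$ playing the role of the length of the gap and the constraints $i_1,i_2\le i/2 < j$, $i_1+i_2 = j-2$ coming verbatim from Lemma~\ref{lem:dec1} and the definition of $\vdash$.

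For part~b) the strategy is identical but now applied to a realizable path with gap $P$ of description $(a,(c,d,j),b,i)$ with $i-j\ge 2$, using Lemma~\ref{lem:dec2} in place of Lemma~\ref{lem:dec1}. The only new feature is that the gap $(c,d,j)$ may end up inside either of the two halves the decomposition splits $P_1$ into, which is precisely why Lemma~\ref{lem:dec2} has two cases and why the right-hand side is a sum of two sums. In case~1 the inner gap lies in $P_2$, so $P_2$ has description $(g,(c,d,j),h,i_2)$ and $P_3$ has description $(k,f,i_3)$; the outer path with gap $P-P_1$ has description $(a,(c_1,d_1,j_1),b,i)$ where $(c_1,d_1,j_1)=(e,f,i_1)$ is the description of $P_1$, and the push/pop glue edges are $c_1 e$ and $f g$. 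In case~2 the roles of $P_2$ and $P_3$ are swapped. Summing $w(P)$ over all $P$ and grouping by the unique decomposition gives, via the same concatenation-multiplicativity of $w$ and the same bijection-between-paths-and-decomposition-data argument, exactly the two-line formula; the inequality constraints $i_1-j\le (i-j)/2 < j_1-j$ (resp.\ $i_2-j \le (i-j)/2 < j_1-j$) and $i_1+i_2=j_1-2$ are copied from the two cases of Lemma~\ref{lem:dec2} and the extended definition of $\vdash$ on paths with gap.

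The main obstacle I expect is bookkeeping rather than conceptual: one must be careful that the weight $w(P-P_1)$ of the path with gap really equals the product of weights of all edges of $P$ outside of $P_1$ and outside the two glue edges $ce,fg$ — this follows from the definition $w(a,(c,d,j),b,i):=w(P')$ after identifying $c,d$, but it has to be checked that the identification does not touch the glue edges and that no edge is double-counted — and that the variable renamings ($j := i_1$ in part~a), $(c_1,d_1,j_1) := $ description of $P_1$ in part~b)) match the statement. One should also verify the parity and range conditions line up: all lengths occurring are even (paths with gap are defined only for even $i,j$, and $w(a,b,i)=0$ for odd $i$ anyway), so the restriction of the outer sums to even $j,i_1,i_2$ loses nothing. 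Once these identifications are pinned down, both parts are immediate from the respective decomposition lemma by exchanging summation order, so no genuinely new argument beyond Lemmas~\ref{lem:dec1} and~\ref{lem:dec2} is needed.
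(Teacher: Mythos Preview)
Your proposal is correct and follows exactly the approach the paper takes: both parts are obtained by summing over paths and regrouping via the unique decompositions furnished by Lemmas~\ref{lem:dec1} and~\ref{lem:dec2}, using multiplicativity of the weight along concatenation. The paper's own proof is only a two-line sketch invoking uniqueness of the decomposition, so your more careful bookkeeping (tracking the glue edges, the variable renamings, and the parity constraints) merely spells out what the paper leaves implicit.
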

\begin{proof}
 (Sketch) For a) oberve that the decomposition of Lemma \ref{lem:dec1} is unique. So we sum the weight of every path from $a$ to $b$ of length $i$ exactly once. For b) use Lemma \ref{lem:dec2} for the same argument.
\end{proof}

The following Lemma is now easy to see:
\begin{lem}
 Let $G$ be an SBP. Then $f_G$ can be computed by a semi-unbounded circuit of depth $O(\log(|G|))$ and size $|G|^{O(1)}$.
\end{lem}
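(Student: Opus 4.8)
The plan is to use the decomposition machinery established in Lemmas~\ref{lem:dec1}--\ref{lem:parallel} to build a semi-unbounded circuit that computes all the quantities $w(a,b,i)$ and $w(a,(c,d,j),b,i)$ in a bottom-up, layered fashion, and then read off $f_G = w(s,t,i)$ summed over all relevant $i$. First I would fix $G$ with $n := |G|$ vertices and note that every $s$-$t$-path we care about has length at most, say, $2^{\ell}$ for $\ell = O(\log n)$ once we recall that $f_G$ has polynomial degree (by the argument in the first lemma of Section~3.2, a relaxed/ordinary SBP of size $n$ computes a polynomial of degree at most $n$, so only paths of length $\le n$ contribute); hence all indices $i,j,i_1,i_2,\dots$ range over $\{0,1,\dots,n\}$ and there are only polynomially many tuples $(a,b,i)$ and $(a,(c,d,j),b,i)$ to consider.

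Next I would organize the computation by the quantity $i - j$ (with $j = 0$, i.e.\ no gap, for the plain $w(a,b,i)$). The base cases are immediate: $w(a,b,i) = 0$ for odd $i$, $w(a,a,0)=1$, $w(a,b,0)=0$ for $a\neq b$, and a path with gap of description $(a,(c,d,i),b,i)$ has weight $1$ iff $a=c,b=d$ and $0$ otherwise. For the inductive layers, part~a) of Lemma~\ref{lem:parallel} expresses $w(a,b,i)$ as a sum of products of four already-computed quantities — one $w(a,(c,d,j),b,i)$ with $j \le i$ (so of strictly smaller "gap-free length" $i-j < i$ when $j>0$, and the $j=0$ term does not occur because the constraint forces $i_1,i_2 \le i/2 < j$) and three others of length $\le i/2$ — and part~b) does the analogous thing for the gapped quantities, again reducing every summand's complexity measure by a constant factor or to a strictly smaller value. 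Each such equation is a sum over polynomially many tuples of a product of a constant number of terms, so it is realized by one unbounded-fanin $+$-gate fed by polynomially many fanin-$2$ $\times$-gates (a constant-depth semi-unbounded sub-circuit), and the edge-weights $w(ce), w(fg)$ etc.\ are input gates. Because the measure $i-j$ (or the pair $(i-j, i)$) at least halves along the part-a) branch that introduces the one "large" sub-path and the remaining branches strictly decrease it, the total depth of the recursion is $O(\log n)$, and each recursion level adds only constant depth, for total depth $O(\log n)$ and total size $n^{O(1)}$.

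The main obstacle, and the point that needs the most care, is verifying that the recursion actually bottoms out in logarithmically many rounds rather than looping: one has to check that in Lemma~\ref{lem:parallel}a) the term $w(a,(c,d,j),b,i)$ that reappears on the right with the \emph{same} outer description $(a,b,i)$ genuinely has a smaller complexity measure (its gap $j$ is strictly positive by the constraint $i_1,i_2 \le i/2 < j$, so $i - j < i$), and that in Lemma~\ref{lem:parallel}b) the recursive gapped terms $w(e,(c,d,j),f,i_1)$ and $w(g,(c,d,j),d_1,i_2)$ carry the \emph{same} gap $(c,d,j)$ but have $i_1 - j$ resp.\ $i_2 - j$ at most $(i-j)/2$ — exactly the content of the two cases of Lemma~\ref{lem:dec2}. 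Setting up a clean potential function — e.g.\ $\mu(a,b,i) = i$ for gap-free descriptions and $\mu(a,(c,d,j),b,i) = i - j$ for gapped ones — and checking that every child in the two recurrences has potential at most $\mu/2$ (or is a base case) is the crux; once that is in hand, laying the gates out in $O(\log n)$ levels indexed by $\lceil\log \mu\rceil$ and bounding the size by counting descriptions is routine. Finally I would note that summing the $O(n)$ values $w(s,t,i)$ with one more unbounded $+$-gate yields $f_G$ without affecting the depth or size bounds.
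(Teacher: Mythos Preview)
Your proposal is correct and is exactly the argument the paper has in mind: the paper records only that the lemma is ``easy to see'' from Lemma~\ref{lem:parallel}, and what you have written is precisely that verification, with the right potential $\mu=i-j$ and the layering by $\lceil\log\mu\rceil$. The one point worth making explicit is that the halving bound on the \emph{gap-free} piece in part~b) (your $i_2$ in the first sum, $i_1$ in the second) is not written in the statement of Lemma~\ref{lem:parallel} but follows from Lemma~\ref{lem:dec1} applied to the ``net'' path underlying Lemma~\ref{lem:dec2}.
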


Combined with Theorem \ref{thm:charvp} we get the parallelization Lemma by Valiant et al. \cite{VSBR83}.

\begin{cor}
 Let $(f_n) \in \vp$. Then $(f_n)$ can be computed by a family of semi-unbounded circuits of polynomial size and logarithmic depth in $n$.
\end{cor}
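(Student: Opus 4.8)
The corollary reduces immediately to the preceding Lemma together with Theorem~\ref{thm:charvp}. Given $(f_n)\in\vp$, Theorem~\ref{thm:charvp} provides a family $(G_n)$ of SBPs with $f_{G_n}=f_n$ and $|G_n|\le p(n)$ for some polynomial $p$; feeding each $G_n$ into the preceding Lemma yields a semi-unbounded circuit $C_n$ of depth $O(\log|G_n|)=O(\log p(n))=O(\log n)$ and size $|G_n|^{O(1)}=n^{O(1)}$, and $(C_n)$ is the required family. (This is exactly the third characterization of $\vp$ in Theorem~\ref{thm:vpknown}, now recovered through the SBP detour.) So all the content sits in the Lemma, which I would prove as follows.

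Using Proposition~\ref{prop:nonopSBP} I would first pass to an SBP without $\nop$-edges, so that path length becomes a faithful parameter; since the underlying graph is acyclic, every realizable path then has length bounded by a polynomial in $|G|$, and $f_G=\sum_i w(s,t,i)$. I would introduce one gate for each quantity $w(a,b,i)$ and each quantity $w(a,(c,d,j),b,i)$, ranging over all $a,b,c,d\in V(G)$ and all relevant (even) $i,j$, and wire them using Lemma~\ref{lem:parallel}: part~(a) writes each $w(a,b,i)$ with $i\ge 2$ as an unbounded-fanin sum whose terms are products of three such quantities with two edge-weight constants $w(ce),w(fg)$, and part~(b) does the analogous thing for each $w(a,(c,d,j),b,i)$. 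Each term is thus computable by $O(1)$ fanin-$2$ multiplication gates feeding one addition gate; the base cases (where $i=0$, or $i=j$ for a gap-quantity) are constants; and since both the number of gates and the fanin of each sum are $|G|^{O(1)}$, the circuit has size $|G|^{O(1)}$.

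The delicate part is the depth. I would assign to a path-quantity $w(a,b,i)$ the weight $i$ and to a gap-quantity $w(a,(c,d,j),b,i)$ the weight $i-j$, and show that along any branch of the recursion the weight drops below half within two levels. By Lemma~\ref{lem:parallel}(a), a path-quantity of weight $i$ refers only to quantities of weight $\le i/2$ (the gap-quantity it calls has $j>i/2$, hence weight $i-j<i/2$, and the two path-quantities have length $\le i/2$). By Lemma~\ref{lem:parallel}(b), a gap-quantity of weight $\ell$ refers to two gap-quantities of weight $\le\ell/2$ and to one path-quantity of length $i_2<\ell$ (from $j_1\le i$, $i_1\ge j$ and $i_1+i_2=j_1-2$), and that path-quantity in turn refers only to weights $\le\ell/2$ by part~(a). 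Since all weights are $|G|^{O(1)}$, after $O(\log|G|)$ levels, each contributing $O(1)$ to the depth, every quantity reached is a base case, so the circuit has depth $O(\log|G|)$.

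The main obstacle is precisely this depth bookkeeping: one has to be sure that the chain of gap-quantities cannot keep spawning path-quantities whose weight fails to shrink. The decomposition Lemmas~\ref{lem:dec1} and~\ref{lem:dec2} of Niedermeier and Rossmanith are tailored exactly to prevent this, the crucial clauses being $i_2-j\le(i-j)/2<i_1-j$ and its mirror image, which force the piece carrying the gap to shrink by essentially a factor of two while the gap-free remainder, though possibly still large, is guaranteed to shrink on its own at the next step. Everything else — the size count, the handling of parities and of the degenerate descriptions, and the final sum $f_G=\sum_i w(s,t,i)$ — is routine.
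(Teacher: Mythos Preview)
Your proposal is correct and follows exactly the paper's route: the Corollary is obtained by combining Theorem~\ref{thm:charvp} with the preceding Lemma, and your sketch of that Lemma via Lemma~\ref{lem:parallel} (with the weight function $i$ resp.\ $i-j$ halving in at most two recursion levels) is precisely the argument the paper intends when it says the Lemma is ``easy to see''. Your depth bookkeeping, in particular the observation that the lone non-halving path-quantity $i_2<\ell$ produced in part~(b) halves at the very next step via part~(a), is the right way to close the loop.
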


\section{Random access memory}

\subsection{Definition}

We change the model of computation by allowing random access memory instead of a stack. We still work over a symbol set $S$ like for SBPs but we introduce three \emph{random access memory operations}: The operation $\writ$ and $\delete$ take an argument $s\in S$ while the operation $\nop$ again takes no argument. Let $\op(s)$ be a random access memory operation with $\op \in \{\writ, \delete \}$ and $P=\op_1 \op_2 \ldots \op_r$ a sequence of memory operations. With $\mathop{occ}(P, \op(s))$ we denote the number of occurences of $\op(s)$ in $P$. We call a sequence $P$ realizable if for all symbols $s\in S$ we have that $\mathop{occ}(P, \writ(s)) = \mathop{occ}(P, \delete(s))$ and for all prefixes $P'$ of $P$ we have $\mathop{occ}(P', \writ(s)) \ge \mathop{occ}(P', \delete(s))$ for all $s\in S$.

Intuitively the random access memory operations do the following: $\writ(s)$ writes the symbol $s$ into the random access memory. If $s$ is already there it adds it another time. $\delete(s)$ deletes one occurence of the symbol $s$ from the memory if there is one. Otherwise an error occurs. $\nop$ is the ``no operation'' operation again like for SBPs. A sequence of operations is realizable if no error occurs during the deletions and starting from empty memory the memory is empty again after the sequence of operations.

\begin{defi}
A \emph{random access branching program} (RABP) $G$ is an ABP with an additional edge labeling $\sigma: E \rightarrow \{ \op(s) \mid \mathop{op} \in \{\writ , \delete\}, s \in S\} \cup \{\nop\}$. A path $P = v_1 v_2 \ldots v_r$ in $G$ has the sequence of random access memory operations $\sigma(P) := \sigma(v_1v_2)\sigma(v_2v_3) \ldots \sigma(v_{r-1}v_r)$. If $\sigma(P)$ is realizable we call $P$ a \emph{random-access-realizable path}. The RABP $G$ computes the polynomial \[f_G = \sum_{P} w(P),\] where the sum is over all random-access-realizable $s$-$t$-paths $P$.
\end{defi}

In a completely analogous way to Proposition \ref{prop:nonopSBP} we can proof that disallowing $\nop$-edges does not change the computational power of RABPs.

\begin{prop}
 Let $G$ be an RABP of size $s$. There is an SBP $G'$ of size $O(s^2)$ such that $f_G= f_{G'}$ and $G'$ does not contain any $\nop$-edges. If $G$ is layered with width $k$, then $G'$ is layered, too, and has width at most $k^2$.
\end{prop}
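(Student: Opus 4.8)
The plan is to mirror, essentially verbatim, the edge-subdivision construction from the proof of Proposition~\ref{prop:nonopSBP}, adapting only the treatment of the memory operations to the random access setting. Let $G=(V,E,w,\sigma)$ be an RABP. I would build $G'$ on the vertex set $V\cup\{v_e\mid e\in E\}$ by subdividing every edge: for each $e=uv\in E$ put into $G'$ the two edges $uv_e$ and $v_ev$ with $w'(uv_e):=w(uv)$ and $w'(v_ev):=1$. If $e$ is a $\nop$-edge, set $\sigma'(uv_e):=\writ(s)$ and $\sigma'(v_ev):=\delete(s)$ for an arbitrary fixed symbol $s\in S$; otherwise give both $uv_e$ and $v_ev$ the operation $\sigma(uv)$. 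Then $G'$ has no $\nop$-edges, and since $|E|=O(s^2)$ in a DAG on $s$ vertices we get $|V(G')|=|V|+|E|=O(s^2)$. If $G$ is layered of width $k$, the vertices $v_e$ with $e$ running between consecutive layers $L_i,L_{i+1}$ form a new layer (inserted between $L_i$ and $L_{i+1}$) of at most $k^2$ vertices, while the original layers still have $\le k\le k^2$ vertices, so $G'$ is layered of width at most $k^2$.

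For correctness, map each $s$-$t$-path $P=v_1\cdots v_r$ of $G$ to the path $\pi(P)$ of $G'$ routed through the subdivision vertices $v_{v_1v_2},v_{v_2v_3},\dots$. Since every $s$-$t$-path of $G'$ alternates between original vertices and subdivision vertices, $\pi$ is a bijection between the $s$-$t$-paths of $G$ and of $G'$, and it preserves weights because the second edge of each gadget carries weight $1$. The only nontrivial point is that $\pi$ restricts to a bijection on the \emph{random-access-realizable} paths: the sequence $\sigma'(\pi(P))$ is obtained from $\sigma(P)$ by replacing each $\nop$ by the block $\writ(s)\delete(s)$ and each remaining operation $\op$ by two consecutive copies $\op\,\op$.

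The heart of the argument — and the one step I expect to need genuine (if short) verification — is the elementary fact that a sequence $Q$ of random access memory operations is realizable if and only if the sequence $\hat Q$ obtained by doubling each of its operations is realizable. For the forward direction, doubling preserves the global equality $\mathop{occ}(Q,\writ(s))=\mathop{occ}(Q,\delete(s))$ for every $s$; and for a prefix of $\hat Q$ one distinguishes whether it ends between the two copies of a doubled block or after it, and in both cases the prefix inequality $\mathop{occ}(\cdot,\writ(s))\ge\mathop{occ}(\cdot,\delete(s))$ follows from the corresponding inequality for the matching prefix of $Q$ (in the case where the prefix stops right after the first copy of a $\delete(s)$ one uses that the prefix of $Q$ ending at that $\delete(s)$ is itself valid, hence already satisfies a strict inequality). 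The converse is immediate, the prefixes and counts of $Q$ being, up to the factor $2$, among those of $\hat Q$. Finally, inserting blocks $\writ(s)\delete(s)$ in place of $\nop$'s is harmless exactly as in the SBP case, since $\writ(s)\delete(s)$ is itself a realizable, self-contained block. Combining these observations shows $\sigma'(\pi(P))$ is realizable iff $\sigma(P)$ is, so $\pi$ is the required weight-preserving bijection on realizable paths and $f_{G'}=f_G$; everything other than the doubling lemma is bookkeeping identical to Proposition~\ref{prop:nonopSBP}.
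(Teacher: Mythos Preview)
Your proposal is correct and follows exactly the approach the paper indicates: the paper's ``proof'' is just the sentence that it is completely analogous to Proposition~\ref{prop:nonopSBP}, and your edge-subdivision construction is precisely that analogy. You even supply the one nontrivial verification the paper omits---that doubling each $\writ$/$\delete$ operation preserves random-access realizability---and you have correctly read the ``SBP'' in the statement as the evident typo for ``RABP''.
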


\subsection{Characterizing VNP}

Intuitively random access on the memory allows us more fine-grained control over the paths in the branching program that contribute to the computation. While in SBPs nearly all of the memory content is hidden, in RABPs we have access to the complete memory at all times. This makes RABPs more expressive than SBPs which is formalized in the following theorem.

\begin{thm}\label{thm:RABP}
 $(f_n) \in \vnp$ if and only if there is a family of polynomial size RABPs computing $(f_n)$.
\end{thm}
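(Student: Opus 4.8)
The plan is to prove the two directions separately, in direct analogy with the SBP/\vp\ characterization, but with the random access memory encoding the truth assignment $e \in \{0,1\}^{p(n)}$ that appears in the definition of \vnp.

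For the direction ``RABP $\Rightarrow$ \vnp'', I would use Valiant's criterion (Lemma \ref{lem:criterion}) together with a counting argument. Given a polynomial size RABP $G$ computing $f_n$, I first observe (as in the \vp\ case) that $\deg(f_G)$ is bounded by $|G|$ and the coefficients are small, so it suffices to show that the coefficient function of $f_G$ is in $\mathsf{\#P/poly}$. The key point is that a random-access-realizable $s$-$t$-path can be \emph{verified} by a nondeterministic polynomial-time machine: guess the path edge by edge (its length is at most $|G|$), and to check realizability it suffices to maintain, for each symbol $s \in S$, a counter $\mathrm{occ}(P', \writ(s)) - \mathrm{occ}(P', \delete(s))$, rejecting if any counter ever goes negative and requiring all counters to be zero at the end. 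Since $|S|$ and the path length are polynomial, these counters fit in polynomial space, so the set of realizable paths of a given weight-monomial is \#P-recognizable; summing over paths and reading off the coefficient of each monomial $\prod X_i^{e_i}$ puts the coefficient function in $\mathsf{\#P/poly}$. By Valiant's criterion this yields $(f_n) \in \vnp$.

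For the direction ``\vnp\ $\Rightarrow$ RABP'', I would start from a family $(g_n) \in \vp$ and a polynomial $p$ with $f_n(X) = \sum_{e \in \{0,1\}^{p(n)}} g_n(e, X)$. By Theorem \ref{thm:charvp}, each $g_n$ is computed by a polynomial size SBP $H_n$ over variables $Y_1, \ldots, Y_{p(n)}, X_1, \ldots, X_{q(n)}$. The idea is to build an RABP that first nondeterministically ``commits'' to an assignment $e \in \{0,1\}^{p(n)}$ by writing, for each $j$, either a symbol $y_j^0$ or a symbol $y_j^1$ into the memory (a simple width-gadget with two parallel edges per index, each carrying a $\writ$), then simulates $H_n$ on this committed assignment, and finally deletes all the committed symbols so that the memory returns to empty. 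Concretely: replace every edge of $H_n$ labelled with variable $Y_j$ by a length-two detour, one branch of which reads weight $0$ and the other of which has weight $1$ but is ``guarded'' so it is only consistent with the commitment $y_j^1$ — the cleanest way to enforce this is to route the ``$Y_j = 1$'' branch through a $\delete(y_j^1)\,\writ(y_j^1)$ pair, so that a path is realizable only if $y_j^1$ was indeed committed (and symmetrically for $0$). Since the SBP $H_n$ uses a stack, I also need to keep the stack discipline working alongside the random access memory; but stack operations and random access operations can be placed on disjoint symbol sets and checked independently, so the realizability conditions simply conjoin. The resulting RABP has polynomial size, and each realizable $s$-$t$-path corresponds to a choice of $e$ together with a stack-realizable path in $H_n$ evaluated at $e$, so its polynomial is $\sum_e g_n(e, X) = f_n$.

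The main obstacle I expect is the second direction, specifically making the ``commit then simulate then clean up'' mechanism airtight: one must ensure that no spurious realizable path exists that, e.g., commits to $y_j^0$ but then traverses a $Y_j = 1$ branch, or that fails to delete exactly the committed symbols at the end. This is handled by the guard gadgets above (a $Y_j = 1$ edge is only passable when $y_j^1$ is in memory, and the final cleanup phase must delete precisely one copy of exactly those symbols it wrote, which the realizability condition $\mathrm{occ}(P,\writ(s)) = \mathrm{occ}(P,\delete(s))$ forces), but the bookkeeping needs to be done carefully. A secondary, more routine, issue is checking that the degree and coefficient bounds in the first direction really do land us in $\mathsf{\#P/poly}$ rather than merely in $\mathsf{GapP/poly}$ — this is dealt with exactly as in Valiant's original argument by splitting into the contributions of monomials and noting coefficients are polynomially bounded because the SBP has polynomial size.
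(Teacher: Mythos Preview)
Your first direction is essentially the paper's: Valiant's criterion plus the observation that realizability of a path is easy to check.

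Your second direction takes a genuinely different route from the paper. The paper does \emph{not} build an RABP directly from the $\sum_e g_n(e,X)$ representation; instead it exhibits a single $\vnp$-complete family (the dominating-set polynomial $DSP_G$) and shows it is computed by a width-$2$ RABP, so that every $(f_n)\in\vnp$ is a $p$-projection of something with a small RABP. This completeness route avoids entirely the issue of simulating a $\vp$ computation inside an RABP.

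Your direct construction, as written, has a real gap. An RABP has only $\writ$, $\delete$ and $\nop$ edges; it has no $\push$/$\pop$ edges, and the RA-realizability condition only tracks per-symbol occurrence counts, not LIFO order. So the sentence ``stack operations and random access operations can be placed on disjoint symbol sets and checked independently, so the realizability conditions simply conjoin'' is not meaningful in this model: you cannot simply carry the SBP's stack discipline into the RABP. If you translate $\push(s)\mapsto\writ(s)$ and $\pop(s)\mapsto\delete(s)$ na\"ively, sequences like $\push(a)\,\push(b)\,\pop(a)\,\pop(b)$ become RA-realizable even though they are not stack-realizable, so the RABP would sum over too many paths and compute the wrong polynomial.

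There are two clean fixes. The simplest is to recall that in the definition of $\vnp$ one may take $(g_n)\in\vpws$ (or even $\vpe$); then $g_n$ is computed by an ordinary ABP with no memory at all, and your commit--simulate--cleanup scheme goes through verbatim with only the $y_j^b$ symbols in play. Alternatively, you can genuinely simulate the stack inside the RABP by enlarging the state to record the current stack height $d$ (bounded by $|H_n|$) and replacing $\push(s)$ at height $d$ by $\writ((s,d))$ and $\pop(s)$ at height $d$ by $\delete((s,d{-}1))$; the prefix condition on RA-realizability then exactly enforces correct matching. Either way the resulting RABP has polynomial size and your argument is completed, but the step you glossed over is precisely the one that needs work.
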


Again we prove the theorem in two independent lemmas, starting with the upper bound which is very easy.

\begin{lem}\label{lem:containvnp}
 If $(f_n)$ is computed by a family of polynomial size RABPs, then $(f_n) \in \vnp$.
\end{lem}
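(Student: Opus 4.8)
The plan is to use Valiant's criterion (Lemma~\ref{lem:criterion}), exactly as one proves that the permanent is in $\vnp$: exhibit the polynomial computed by an RABP as an exponential sum, over Boolean vectors encoding a \emph{guessed} path together with its realizability certificate, of a term that is easy to evaluate. Concretely, let $(G_n)$ be a family of RABPs of size at most $p(n)$ computing $(f_n)$. Fix $n$, write $G = G_n$, $m = |G|$, and note $\deg(f_n) \le m$ since every $s$-$t$-path has length at most $m-1$. The first step is to argue that it suffices to count, with the correct field coefficients, the random-access-realizable $s$-$t$-paths in $G$: each such path has weight a monomial $\prod_i X_i^{d_i}$ of degree at most $m$, and $f_G$ is the sum of these monomials.

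The key step is to choose a polynomial-length Boolean encoding $e$ of a candidate $s$-$t$-walk of length exactly $\ell$ for each $\ell \le m$ (say, the sequence of vertices, each written in $\lceil \log m\rceil$ bits), and to define $\phi(e)$ to be $1$ if $e$ encodes a genuine random-access-realizable $s$-$t$-path in $G$ carrying a prescribed monomial pattern, and $0$ otherwise. The point is that checking whether $e$ encodes a valid path is trivial (consecutive vertices must be joined by an edge), and — crucially — checking realizability of the induced sequence $\sigma(P)$ of random access memory operations is also easy: by the definition of realizable sequences for RABPs, one only needs, for each symbol $s \in S$ and each prefix, that $\mathrm{occ}(P',\writ(s)) \ge \mathrm{occ}(P',\delete(s))$, with equality at the end. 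These are counting comparisons on a string of polynomial length, hence computable in logarithmic space, so $\phi \in \mathsf{FP} \subseteq \mathsf{\#P/poly}$; the degree bound guarantees each exponent $e_i$ stays bounded so that the monomial $\prod_i X_i^{e_i}$ in Valiant's criterion matches the monomials actually arising. Summing $\phi(e)\prod_i X_i^{e_i}$ over all $e$ then reproduces $f_G$ up to the bookkeeping of which variables appear, and Lemma~\ref{lem:criterion} puts $(f_n)$ in $\vnp$.

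One wrinkle to handle cleanly: the edge weights may be \emph{field constants} from $\mathbb{F}$, not just variables, so a path's weight is a scalar times a monomial. The standard fix is to fold the scalar into $\phi$: redefine $\phi(e)$ to be the product of the constant parts of the weights along the encoded path (times the indicator of validity and realizability), which is still an $\mathsf{FP}$-computable function of $e$ once we observe the circuit's constants are fixed per $n$ and the product has polynomially many factors, so Valiant's criterion still applies (its statement allows $\phi:\{0,1\}^*\to\mathbb{N}$, and over a general field one uses the analogous criterion with $\mathbb{F}$-valued $\phi$ in $\mathsf{\#P/poly}$, as is routine). The main obstacle is thus not conceptual but a matter of being careful that the encoding length, the exponents, and the coefficient function all stay polynomially bounded and efficiently computable simultaneously; once the encoding is fixed, everything else is the same template as Valiant's original permanent argument.
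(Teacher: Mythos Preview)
Your proposal is correct and follows exactly the approach the paper takes: the paper's proof is a two-line appeal to Valiant's criterion (Lemma~\ref{lem:criterion}) together with the observation that testing random-access realizability of a path is in~$\mathsf{P}$, and you have simply fleshed out the mechanics of that appeal. The minor wrinkles you flag (field constants versus $\phi:\{0,1\}^*\to\mathbb{N}$, and non-multilinear monomials versus the stated form $\prod_i X_i^{e_i}$) are genuine but are indeed resolved by the standard moves you indicate---introducing fresh variables for the constants and for repeated occurrences, then projecting---so nothing is missing.
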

\begin{proof}
 This is easy to see with Valiant's criterion (Lemma \ref{lem:criterion}) and the fact that checking if a path through a RABP is realizable is certainly in $\mathsf{P}$.
\end{proof}

We will now show the lower bound of Theorem \ref{thm:RABP}. We will prove it directly for bounded width RABPs. To do so we consider the following dominating-set polynomial for a graph $G =(V,E)$:

\[DSP_G(X_1, \ldots , X_n) := \sum_{D} \prod_{v\in D}X_v,\] where the sum is over all dominating sets $D$ in $G$.

In Appendix \ref{app:dshard} we show that the is a family $(G_n)$ of graphs such that the resulting family $(DSP_{G_n})$ of polynomials is $\vnp$-complete.

\begin{lem}
For each family $(f_n)\in \vnp$ there is a family of width $2$ RABPs of polynomial size computing $(f_n)$.
\end{lem}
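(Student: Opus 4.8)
The plan is to reduce to the previously established results rather than building a width-2 RABP from scratch. Since $(DSP_{G_n})$ is $\vnp$-complete (by the appendix), it suffices to produce a family of polynomial-size RABPs of width $2$ computing $(DSP_{G_n})$ for the specific graph family $(G_n)$; then an arbitrary $(f_n) \in \vnp$ is a $p$-projection of $(DSP_{G_n})$, and substituting variables/constants into edge weights of an RABP preserves both the width and (up to a polynomial) the size, which finishes the lemma.

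So the core task is to compute $DSP_G = \sum_D \prod_{v \in D} X_v$, summed over dominating sets $D$ of $G=(V,E)$, by a width-$2$ RABP. The first idea I would try: use the random access memory to ``guess'' a subset $D \subseteq V$ and to verify the domination condition. Process the vertices $v_1, \dots, v_n$ in order; for each $v_i$ the RABP branches on whether $v_i \in D$ (contributing the factor $X_{v_i}$ on that edge, via a weighted edge, and a $\writ$ that records membership) or $v_i \notin D$ (weight $1$). The subtlety is that ``$D$ is dominating'' is a global constraint: every vertex must be in $D$ or adjacent to some vertex in $D$. The realizability condition on the memory is exactly a balanced-parenthesis-type matching ($\mathop{occ}(P,\writ(s)) = \mathop{occ}(P,\delete(s))$ for every symbol), so I would encode each domination requirement as a pair of matching symbols that must be written and later deleted. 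Concretely, when the RABP decides $v_i \in D$, it should, for each vertex $u$ in the closed neighborhood $N[v_i]$, record that ``$u$ is now dominated''; and separately, in a later pass (or interleaved), for each vertex $u$ the path must consume exactly one ``$u$ is dominated'' token. The matching discipline of the memory then forces: (number of times $u$'s domination was asserted) $=$ (number of times it is consumed) $= 1$, so every $u$ is dominated at least once — but we need \emph{exactly} the right bookkeeping so that distinct dominating sets are counted once each with the correct monomial. This is the delicate combinatorial design step: getting the token multiplicities to match up so that the sum over random-access-realizable paths equals precisely $\sum_D \prod_{v\in D} X_v$ and not some weighted overcount.

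Once the RABP is designed without regard to width, I would reduce the width to $2$ using the same strategy as in the width-reduction lemma for SBPs: serialize the branching program so that at any layer only two vertices are live, using the memory itself to remember ``which edge of the original program we are simulating next'' (push the successor-edge identifier, pop it when we get there), exactly as in Section~\ref{sct:widthreductionSBP} but with $\writ/\delete$ in place of $\push/\pop$ — here the discipline is even more forgiving since RABP memory is unordered. The main obstacle I expect is the combinatorial encoding in the previous paragraph: arithmetic branching programs \emph{sum} weights over all realizable paths, so I must ensure there is a \emph{bijection} between dominating sets $D$ and random-access-realizable $s$-$t$-paths (all of weight $\prod_{v\in D} X_v$), rather than a many-to-one correspondence; making the guess of $D$ and the verification of domination deterministic-given-$D$, while using only memory operations whose realizability enforces ``$\ge 1$'' lower bounds (not equalities on the guessed side), is the part that needs care. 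An alternative fallback is to appeal directly to Valiant-style constructions: any $\mathsf{\#P/poly}$ function $\phi$ gives, via a polynomial-size verifier, an RABP that guesses a witness and checks it by writing/deleting matching tokens for each satisfied clause of the verification circuit; $DSP$ is then just the special case, and the $\mathsf{\#P}$-completeness-style verifier is routine.
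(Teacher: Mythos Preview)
Your high-level strategy coincides with the paper's: reduce to $DSP_G$ via its $\vnp$-completeness, build an RABP that first guesses $D\subseteq V$ (writing, for each $v\in D$, the symbol $u$ for every $u\in N[v]$, and contributing weight $X_v$) and then verifies domination by deleting symbols. You also correctly isolate the crux: the deletion phase must make the correspondence between dominating sets and realizable paths a \emph{bijection}.

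Where your proposal stalls is precisely at that crux. Your concrete suggestion---``for each vertex $u$ the path must consume exactly one `$u$ is dominated' token''---does not work: if $u$ has $k\ge 2$ neighbours in $D$ (plus possibly $u$ itself), then $k$ copies of the symbol $u$ are written, and deleting only one leaves the memory nonempty, so \emph{no} path is realizable for that $D$. You would lose most dominating sets, not overcount them. The fix the paper uses is simple but is the missing idea: the verification gadget for $u$ must delete \emph{some number between $1$ and $\deg(u)+1$} copies of $u$, and must be built so that for each such count there is \emph{exactly one} path through the gadget. A ``staircase'' of width $2$ does this: an initial forced $\delete(u)$ edge (guaranteeing at least one deletion), followed by $\deg(u)$ layers each offering a choice between a $\delete(u)$ edge staying on the lower track and a $\nop$ edge moving irrevocably to the upper track. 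Given the number $c$ of copies of $u$ in memory, the path is forced to take $c-1$ further $\delete(u)$ edges and then escape, so the path through the verification phase is uniquely determined by $D$. That yields the bijection and the correct polynomial.

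A secondary point: the paper's gadgets are width $2$ by construction (each $G_v$ and each $G_v'$ has two parallel tracks), so no separate width-reduction pass in the style of Section~\ref{sct:widthreductionSBP} is needed. Your plan to serialize via $\writ/\delete$ of edge-identifiers would likely work as well, but it is extra machinery here.
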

\begin{proof}
 We will show that for a graph $G=(V,E)$ with $n$ vertices there is a RABP of size $n^{O(1)}$ and width $2$ that computes $DSP_G(X_1, \ldots , X_n)$. The RABP works in two stages. The symbol set of the RABP will be $V$. In a first stage it iteratively selects vertices $v$ and writes $v$ and all of its neightbors into the memory. In a second stage it checks that each vertex $v$ was written at least once into the memory, i.e., either $v$ or one of its neighbors was chosen in the first phase. Thus the set of chosen vertices must have been a dominating set.
 
 So fix a graph $G$. For each vertex $v$ with neighbors $v_1, \ldots , v_k$ we construct a gadget $G_v$ as shown in Figure \ref{fig:chooseDS}. We call the path through $G_v$ with the edges that have memory operations the \emph{choosing path}. Now for each vertex $v$ we construct a second gadget $G'_v$ that is shown in Figure \ref{fig:checkDS}. Choose an order on the vertices. For each non-maximal vertex $v$ in the order with successor $u$, we connect the sink of $G_v$ to the source of $G_u$ and the sink of $G_v'$ to the source of $G_u'$ with a $\nop$-edge of weight $1$. Finally, let $x$ be the maximal vertex in the order and $y$ the minimal vertex. Connect the sink of $G_{x}$ to the source of $G_{y}'$ again by a $\nop$-edge of weight $1$.

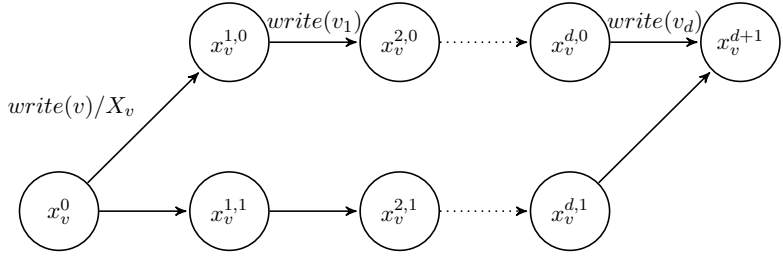
\begin{figure}[t]
 \centering
 \begin{tikzpicture}[->,>=stealth',shorten >=1pt,auto,node distance=2.8cm,semithick,scale=0.8, every node/.style={transform shape}]
  \tikzstyle{every state}=[minimum size=13mm]

  \node[state]         (A)  {$x_{v}^0$};
  \node[state]         (D) [right of=A] {$x_{v}^{1,1}$};
  \node[state]         (B) [above of=D] {$x_{v}^{1,0}$};
  \node[state]         (C) [right of=B] {$x_{v}^{2,0}$};
  \node[state]         (E) [right of=D]       {$x_{v}^{2,1}$};
  \node[state]         (H) [right of=C] {$x_{v}^{d,0}$};
  \node[state]         (J) [right of=E]       {$x_{v}^{d,1}$};
  \node[state]         (F) [right of=H]       {$x_{v}^{d+1}$};

  \path (A) edge              node {$\writ(v)/X_v$} (B)
            edge              node {} (D)
        (B) edge              node {$\writ(v_1)$} (C)
        (D) edge              node {} (E)
        (H) edge              node {$\writ(v_d)$} (F)
        (J) edge              node {} (F);

  \path[dotted] (E) edge              node {} (J)
        (C) edge              node {} (H);

\end{tikzpicture}
\caption{The gadget $G_v$. Let $v$ be a vertex with neighbors $v_1, \ldots, v_d$. The weight of $x_v^0 x_v^{1,0}$ is $X_v$ while all other edges have weight $1$. $G_v$ has two paths. Every realizable path that traverses $G_v$ on the upper path writes $v$ and all of its neightbors into the memory. This path has weight $X_v$. Realizable paths through the upper path do not change the memory in $G_v$ and have a weight weight contribution of $1$ in $G_v$.}\label{fig:chooseDS}
\end{figure}

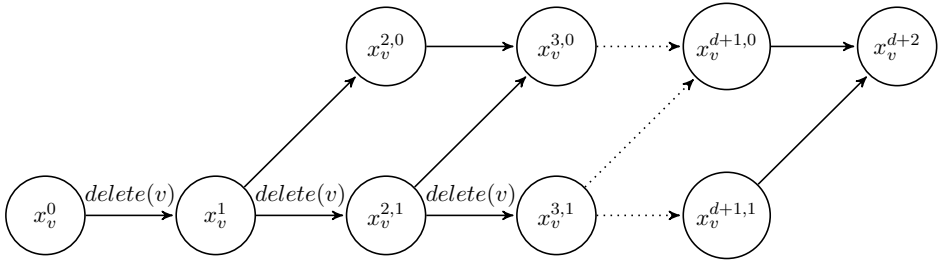
\begin{figure}[t]
 \centering
\begin{tikzpicture}[->,>=stealth',shorten >=1pt,auto,node distance=2.8cm,semithick,scale=0.8, every node/.style={transform shape}]
  \tikzstyle{every state}=[minimum size=13mm]

  \node[state]         (X)              {$x_v^0$};
  \node[state]         (A) [right of=X] {$x_{v}^1$};
  \node[state]         (D) [right of=A] {$x_{v}^{2,1}$};
  \node[state]         (B) [above of=D] {$x_{v}^{2,0}$};
  \node[state]         (C) [right of=B] {$x_{v}^{3,0}$};
  \node[state]         (E) [right of=D]       {$x_{v}^{3,1}$};
  \node[state]         (H) [right of=C] {$x_{v}^{d+1,0}$};
  \node[state]         (J) [right of=E]       {$x_{v}^{d+1,1}$};
  \node[state]         (F) [right of=H]       {$x_{v}^{d+2}$};

  \path (X) edge              node {$\delete(v)$} (A)
        (A) edge              node {} (B)
            edge              node {$\delete(v)$} (D)
        (B) edge              node {} (C)
        (D) edge              node {$\delete(v)$} (E)
            edge              node {} (C)
        (H) edge              node {} (F)
        (J) edge              node {} (F);

  \path[dotted] (E) edge              node {} (J)
        (C) edge              node {} (H)
        (E) edge              node {} (H);

\end{tikzpicture}
 \caption{The gadget $G_v'$. Let $d$ be the degree of $v$, then $G_v'$ has $d+3$ layers. All edges have weight $1$. The edges connecting vertices in the lower level have operation $\delete(v)$ while all other edges have no memory operation. Every realizable path through $G_v'$ has weight $1$ and deletes between $1$ and $d+1$ occurences of the symbol $v$ from memory.}\label{fig:checkDS}
\end{figure}

 We claim that $G'$ computes $DSP_G$. To see this, define the weight of a vertex set $D$ in $G$ to be $w(S) := \prod_{v\in S} X_v$. The following claim completes the proof.
 
 \begin{clm}
  There is a bijection $\pi$ between dominating sets in $G$ and RA-realizing paths in $G'$ such that for each dominating set $D$ in $G$ we have $w(D):= w(\pi(D))$.
 \end{clm}
\begin{proof}
Observe that for RA-realizing paths through $G'$ once the path through the gadgets $G_v$ is chosen, then rest of the path is fixed. So each RA-realizing path $P$ can be described completely by the $v$ for which the choosing paths through $G_v$ is taken.

Let $D$ be a dominating set. Let $\mathcal{P}$ be the set of $s$-$t$-paths in $G'$ that for each $v\in D$ take the choosing path through $G_v$ and for each $G'$ take the other path. Because $D$ is dominating, after a path $P\in \mathcal{P}$ has passed through the gadgets $G_v$, it contains each symbol $v\in V$ at least once. Thus there is a unique path in $\mathcal{P}$ that is RA-realizing. Call this path $\pi(D)$.

Obviously, $\pi$ is injective. To show that it is surjective, too, consider an RA-realizable path $P$ in $G'$. Let $D$ be the set of $V\in V$ for which $P$ takes the choosing path. The path $P$ passes every gadget $G_v'$, so each element $v\in V$ gets deleted from the memory at least once. It follows that each $v\in V$ must have been written to memory at least once before. So for $v\in V$ the path $P$ must go through $G_v$ or through $G_u$ for a neighbor $u$ of $v$. It follows that $D$ is a dominating set. Furthermore, $\pi(D) =P$, so $\pi$ is surjective.

Finally, $w(D):= w(\pi(D))$ is true, because the only weighted edges in $G'$ are in the gadgets $G_v$ and for each $v$ the weighted edge in $G_v$ has the weight $X_v$.
\end{proof}
 
Observing that $G'$ has width $2$, completes the proof.
\end{proof}

%
%
%
%

\paragraph*{Acknowledgements:} The author would like to thank Sébastien Tavenas for pointing out an error in an earlier proof of Lemma \ref{lem:charvpseconddirection}. The corrected proof presented in this paper is the result of discussions with him and Pascal Koiran. The author is very thankful for this contribution. Furthermore, the author is grateful to Guillaume Malod who gave very helpful feedback on a draft of this paper. Finally, the author would like to thank Peter Bürgisser and Meena Mahajan for encouraging him to write up these results as a paper.

\bibliographystyle{plain}

\begin{appendix}
 \section{VNP-completeness of the dominating-set polynomial}\label{app:dshard}
 
 In this appendix we show that there is a family of graphs such that the the polynomial family $(DSP_{G_n} := \sum_{D} \prod_{v\in D}X_v)$ is $\vnp$-complete. With Valiant's criterion (Lemma \ref{lem:criterion}) containment in $\vnp$ is clear. 
 
 For hardness we will reduce from the polynomial $VCP_G = \sum_{S} \prod_{v\in S}X_v$ where the sum is over all vertex covers $S$ of $G$. This polynomial was introduced by Briquel and Koiran \cite{BriquelK2009} who showed the following hardness result:
 
 \begin{lem}[\cite{BriquelK2009}] \label{lem:VChard}There exists a family $G_n$ of polynomial size graphs such that $(VCP_{G_n})$ is $\vnp$-complete.
  \end{lem}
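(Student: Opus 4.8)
The plan is to establish the two halves of $\vnp$-completeness separately: membership of $(VCP_{G_n})$ in $\vnp$ for every polynomially bounded family of graphs, and $\vnp$-hardness for a suitably chosen universal family $(G_n)$. The membership half is the easy one. Writing $VCP_G = \sum_{S \subseteq V} c_S \prod_{v\in S} X_v$, each monomial is multilinear and the coefficient $c_S \in \{0,1\}$ is precisely the indicator of the event that $S$ is a vertex cover of $G$, a property decidable in polynomial time from $G$ and $S$. Identifying a set $S$ with its characteristic vector $e \in \{0,1\}^{|V|}$, the polynomial is already in the exact shape required by Valiant's criterion (Lemma \ref{lem:criterion}) with $\phi(e) = c_S \in \mathsf{\#P}$, so $(VCP_{G_n}) \in \vnp$ for any polynomial-size family of graphs.

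For hardness it suffices, by transitivity of $\le_p$ and the $\vnp$-completeness of the permanent (Valiant \cite{Valiant1979}), to exhibit a polynomial-size family of graphs $(H_m)$ together with a projection witnessing $\mathrm{perm}_m \le VCP_{H_m}$, that is, $\mathrm{perm}_m$ obtained from $VCP_{H_m}$ by substituting field constants $0,1$ and the matrix variables $A_{ij}$ for the vertex variables; setting $G_n := H_n$ then makes $(VCP_{G_n})$ hard. Following Briquel and Koiran, the construction should encode permutations. I would start from the \emph{rook graph} $R_m$ whose vertices are the cells $e_{ij}$, $i,j \in [m]$, with an edge joining two cells that share a row or a column. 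The independent sets of $R_m$ are exactly the placements of pairwise non-attacking rooks, i.e.\ the partial permutation matrices, and the maximal ones are the full permutation matrices; weighting $e_{ij}$ by $A_{ij}$ and keeping all auxiliary vertices at weight $1$ is the base of the encoding, with the row- and column-cliques enforcing the ``at most one per line'' constraint.

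The main obstacle is reconciling two mismatched semantics. A vertex cover is an upward-closed (monotone) object whose weight is the product over the cover $S$, whereas a permanent term is a product over the selected entries of a permutation, which form an antichain; no substitution turns $\prod_{v\in S}$ into $\prod_{v\notin S}$ directly, as that would require division. I would handle this by attaching to each cell a small forcing gadget (a pendant edge, or a triangle with one variable set to $0$) so that the genuine in/out choice is delegated to an auxiliary vertex, and by using $0$-projections to annihilate every non-minimal cover. The effect is that only covers corresponding to permutation matrices survive with a nonzero weight, and a final bookkeeping substitution collapses the resulting generating function over partial permutations to the full permanent by forcing every row to be used. The technical core is verifying that this gadgeting yields a weight-preserving bijection between surviving covers and permutations $\sigma \in S_m$, with weight $\prod_{i} A_{i\sigma(i)}$, while all spurious covers vanish; once this is in place, combining $\mathrm{perm}_m \le_p (VCP_{G_n})$ with the $\vnp$-completeness of the permanent and transitivity of $\le_p$ shows that every family in $\vnp$ is a $p$-projection of $(VCP_{G_n})$, completing the proof.
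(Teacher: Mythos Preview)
The paper does not prove this lemma; it is quoted from Briquel and Koiran \cite{BriquelK2009} and used as a black box in the reduction to the dominating-set polynomial. There is therefore no proof in the paper to compare your attempt against.

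As for your sketch on its own terms: the membership half via Valiant's criterion is fine. The hardness half is a plan, not a proof. You correctly diagnose the central difficulty---vertex covers are upward-closed and weighted by a product over the cover, whereas a permanent monomial is a product over an antichain of selected cells---but you do not actually resolve it. ``Attach a small forcing gadget'' and ``use $0$-projections to annihilate every non-minimal cover'' are not constructions; in particular, setting $X_v=0$ removes every cover \emph{containing} $v$, which is not at all the same as removing non-minimal covers, and you give no mechanism by which exactly the permutation-shaped covers survive with weight $\prod_i A_{i\sigma(i)}$. The step you yourself label ``the technical core'' is precisely what is missing; as written this is a promissory note rather than a reduction.
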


With Lemma \ref{lem:VChard} it suffices to show that for every graph $G$ there is a graph $G'$ of size polynomial in the size of $G$ such that $$VCP_G \le DSP_{G'}.$$ So let $G=(V,E)$ be a graph. We construct $G'$ by adding for each $e=uv\in e$ a vertex $v_e$ and the edges $uv_e$ and $vv_e$. Every dominating set $D$ of $G'$ must contain $v_e$ or one of $u,v$. Thus $D$ is either a vertex cover of $G$ or it contains a vertex $v_e$ for an $e\in E$. Setting $X_{v_e}:=0$ one gets $VCP_G$ as the projection of $DSP_{G'}$. This finishes the proof.
 
\end{appendix}

\end{document}